\newcommand\omicron{o}
\theoremstyle{plain}
\newtheorem{proposition}{Proposition}
\newtheorem{lemma}{Lemma}
\newtheorem{theorem}{Theorem}
\newtheorem{corollary}{Corollary}
\newtheorem*{main}{Theorem}
\newtheorem*{definition}{Definition}
\newtheorem{remark}{Remark}
\def\bmg{{\bm g}}
\def\bmh{{\bm h}}
\def\bmo{{\bm o}}
\def\bmB{{\bm B}}
\def\bmE{{\bm E}}
\def\bmF{{\bm F}}
\def\bmH{{\bm H}}
\def\bmK{{\bm K}}
\def\bmQ{{\bm Q}}
\def\bmS{{\bm S}}
\newcounter{mnotecount}%[section]
\newcommand{\mnotex}[1]%{}
{\protect{\stepcounter{mnotecount}}$^{\mbox{\footnotesize $\bullet$\themnotecount}}$ 
\marginpar{%\color{red}%
\raggedright\tiny\em
$\!\!\!\!\!\!\,\bullet$\themnotecount: #1} }
\newcommand{\notimplies}{%
  \mathrel{{\ooalign{\hidewidth$\not\phantom{=}$\hidewidth\cr$\implies$}}}}
\begin{document}

\title{\textbf{The conformal Killing spinor initial data equations}}

\author[1]{E. Gasper\'in \footnote{E-mail address:{\tt edgar.gasperin@tecnico.ulisboa.pt}}}
\author[2]{J. L. Williams \footnote{E-mail address:{\tt jrrodwilliams@gmail.com}}}
\affil[1]{CENTRA, Departamento de F\'isica,
  Instituto Superior T\'ecnico IST, Universidade de Lisboa UL, Avenida
  Rovisco Pais 1, 1049 Lisboa, Portugal.}
\affil[2]{Department of Mathematical Sciences, University of Bath, Claverton Down, Bath BA2 7AY, United Kingdom.}

\maketitle
\begin{abstract}
  We obtain necessary and sufficient conditions for an initial data
  set for the \emph{vacuum conformal Einstein field equations} to give
  rise to a spacetime development in possession of a Killing spinor.  The
  fact that the conformal Einstein field equations are used in our
  derivation allows for the possibility of the initial hypersurface
  $\mathcal{S}$ intersecting non-trivially with (or even being a subset of)
  null infinity $\mathscr{I}$.  For conciseness, these
  conditions are derived assuming that the initial hypersurface is
  spacelike. Hence, in particular, these \emph{conformal Killing spinor initial data} equations encode necessary and
  sufficient conditions for the existence of a Killing spinor in the
  development of asymptotic initial data on spacelike components of
  $\mathscr{I}$.
\end{abstract}

\section{Introduction}
\label{sec:Introduction}

The discussion of symmetries  in General
Relativity is ubiquitous. From the question of the integrability of the geodesic
equations to the existence of explicit solutions to the Einstein field
equations and the black hole uniqueness problem, symmetries play an important role.   
Symmetry assumptions are usually incorporated into
 the Einstein field equations ---which in vacuum read
\begin{equation}
\tilde{R}_{ab}=\lambda \tilde{g}_{ab}
\label{EFEVacuum}
\end{equation} 
---through the use of Killing vectors.  From the spacetime point of
view, the existence of Killing vectors allows one to perform
\emph{symmetry reductions} of the Einstein field equations ---see \cite{Wei90a} for
instance. This approach has been exploited in classical
uniqueness results such as \cite{Rob75b}.  Closely related to the
black hole uniqueness problem, characterisations and classifications
of solutions to the Einstein field equations usually exploit the
symmetries of the spacetime in one way or another, e.g. in the
characterisations of the Kerr spacetime via the \emph{Mars--Simon
  tensor} ---see \cite{Mar99,Mar00,Sim84, MarPaeSenSim16, MarPeo22}.
On the other hand, from the point of view of the Cauchy problem,
symmetry assumptions should be imposed only at the level of initial
data. In this regard, symmetry assumptions can be phrased in terms of
the \emph{Killing vector
initial data}.  The Killing vector initial data (KID) equations are a
system of PDEs, defined over a spacelike hypersurface $\tilde{\mathcal{S}}$ and with coefficients computable in terms of the first and second fundamental forms $\tilde{\bmh}$ and $\tilde{\bmK}$,
%with coefficients given in terms of the first and second 
%$(\tilde{\mathcal{S}},\tilde{\bmh},\tilde{\bmK})$ 
%for the Einstein field equations 
% ---where $\tilde{\mathcal{S}}$ is a 3-dimensional
% manifold with first and second fundamental forms $\tilde{\bmh}$ and
% $\tilde{\bmK}$--- 
whose solutions (whenever they exist) correspond to initial data for Killing vectors on the ensuing spacetime development $(\tilde{\mathcal{M}},\tilde{\bmg})$, in the form of lapse-shift pairs ---see \cite{BeiChr97b}.
While Killing vectors
play a central role in the discussion of the symmetries, their existence is sometimes not enough to encode all the symmetries and conserved quantities enjoyed by a spacetime e.g. the Carter constant in the Kerr spacetime. One approach to unraveling some of
these \emph{hidden symmetries} is to consider a more fundamental type of object, namely \emph{Killing spinors}, denoted
here by $\tilde{\kappa}_{AB}$. For vacuum spacetimes, the existence of
a Killing spinor directly implies the existence of a Killing
vector. The \emph{Killing spinor initial data equations} have been
derived in the \emph{physical framework} ---i.e. where the manifold on interest is a solution to the Einstein
field equations--- in \cite{GarVal08c}. These equations have been
used in the construction of a geometric invariant which detects
whether or not an initial data set corresponds to initial data for the
Kerr spacetime ---see \cite{BaeVal10a,BaeVal10b,BaeVal10c,BaeVal11b}.
This analysis has also been extended to include suitable classes of
matter ---see \cite{ValCol16} for an analogous characterisation of
initial data for the Kerr-Newman spacetime.  In these
characterisations, some asymptotic conditions on the initial data are
required. These conditions usually take the form of decay assumptions
on $\tilde{\bmh}$, $\tilde{\bmK}$ and $\tilde{\bm\kappa}$ on
$\tilde{\mathcal{S}}$, given in terms of asymptotically Cartesian
coordinates.  Alternatively, following \emph{Penrose's proposal}, the
asymptotic region of the spacetime is to be studied in a geometric way
through conformal compactifications.  In this approach one starts with
a \emph{physical spacetime} $(\tilde{\mathcal{M}},\tilde{\bmg})$ where
$\tilde{\mathcal{M}}$ is a 4-dimensional manifold and $\tilde{\bmg}$
is a Lorentzian metric which is a solution to the Einstein field
equations.  Then, one introduces an \emph{unphysical spacetime}
$(\mathcal{M},\bmg)$ into which $(\tilde{\mathcal{M}},\tilde{\bmg})$
is conformally embedded:
%.  Accordingly, there exists an embedding
$\varphi: \tilde{\mathcal{M}} \rightarrow \mathcal{M}$ such that
\begin{equation} \label{eqn:Chapter:Introduction:ConformalRescaling}
\varphi^{*}\bmg=\Xi^2\tilde{\bmg}.
\end{equation}
where $\Xi$ is the \emph{conformal factor}. For so-called \emph{asymptotically simple} spacetimes, one can choose $\Xi$ such that the metric
 $\bmg$ is well defined at the points where $\Xi=0$, these points being at infinity from the physical spacetime perspective.
 \noindent The set of points where the conformal factor vanishes
 is called the conformal boundary and the hypersurface defined by
\[
 \mathscr{I} := \big\{p \in \mathcal{M} \hspace{0.2cm}| \hspace{0.2cm}
 \Xi(p)=0 , \hspace{0.2cm} \mathbf{d}\Xi(p) \neq0\big\}
\]
is called null infinity. In vacuum, the causal nature of this
hypersurface is determined by the sign of the cosmological constant,
being null, spacelike or timelike if $\lambda$ is zero, positive or
negative respectively. The notion of null infinity comprises a powerful tool
for the analysis of asymptotic properties of spacetimes, with potential applications to various open problems in General Relativity.
%%%
In making use of these ideas, however, one has to contend with the fact that the Einstein field
equations are not conformally invariant. Moreover, a direct
computation using the conformal transformation formula for the Ricci
tensor shows that the vacuum Einstein field equations
\eqref{EFEVacuum} lead to an equation which, since it includes $\Xi^{-1}$-terms, is formally singular at
the conformal boundary. An approach to remedying this problem was
given in \cite{Fri81a} where a regular set of equations for the
unphysical metric $\bmg$ was derived. These equations are known as the
\emph{conformal Einstein field equations} (CFEs).  The crucial
property of these equations is that they are regular even at the points
where $\Xi=0$, and a solution thereof implies a
solution to the Einstein field equations wherever $\Xi\neq 0$ ---see \cite{Fri81a,Fri83}
and \cite{CFEbook} for a comprehensive discussion. 
%%%% 
The CFEs have found application in
the stability analysis of spacetimes ---see for instance \cite{Fri86b,
  Fri86c} for the proof of the global and semi-global non-linear
stability of the de Sitter and Minkowski spacetimes, respectively.
From the point of view of this article, the main advantage of the conformal
(unphysical) approach to the Einstein field equations is that null
infinity $\mathscr{I}$, being a submanifold of $(\mathcal{M},\bmg)$, is a bonafide hypersurface on which to prescribe data, to be evolved using
(regular) evolution equations. This set
up is particularly attractive in cases where $\lambda>0$, in
which, given the appropriate conditions (sufficient decay of matter fields
at infinity), null infinity is a spacelike hypersurface, allowing for one to pose an \emph{asymptotic initial value problem}: an
initial value problem where the initial hypersurface is $\mathscr{I}$
---see \cite{GasVal17,MarPaeSenSim16, LueVal09}.

\medskip

In \cite{GarKha19}, the authors generalise the KID equations to the broader class of conformal Killing vectors
%%% 
and in \cite{Gar22} these equations are used to characterise initial data for PP-wave spacetimes. It should be noted that, as the analysis is carried out in the \emph{physical framework}, these conditions only apply to solutions of the Einstein field equations; in particular, the initial hypersurface does not extend to $\mathscr{I}$. On the other hand, the conformally-regular counterpart of the KID equations was derived in \cite{Pae14a}; see also \cite{MarPeo22}
for a generalisation to higher spacetime dimensions.
That is to say, intrinsic conditions on an initial hypersurface
$\mathcal{S}\subset \mathcal{M}$ of the unphysical spacetime $(\mathcal{M},\bmg)$ ---a solution of the CFEs--- are found such that the development of the data gives rise to a conformal Killing vector on $(\mathcal{M},\bmg)$, which moreover corresponds to a Killing vector of the physical spacetime $(\tilde{\mathcal{M}},\tilde{\bmg})$. This construction, in contrast to the previous work, allows for the possibility of $\mathcal{S}$ intersecting non-trivially with, or even being a subset of, 
$\mathscr{I}$.  
%%%%

\medskip

As previously mentioned, in the case of Petrov type
D spacetimes such as the Kerr-de Sitter spacetime, the symmetries of
the spacetime are closely related to the existence of Killing spinors.
Hence, a natural question is whether there exists a conformal
counterpart ---i.e. in the unphysical framework--- of the Killing spinor
initial data equations introduced in \cite{GarVal08c}. In other words, what are the extra conditions that one has to impose on an initial data set for the CFEs so that the arising development contains a Killing spinor? This question is answered in
this article by deriving such conditions, these being termed the
\emph{conformal Killing spinor initial data equations}. 

\medskip
The main result of this article, the more precise statement of which
can be found in Theorem \ref{Theorem_KS}, is summarised informally
in the following:

\begin{main}\label{TheoremSummary}

If the conformal Killing spinor initial data equations
\eqref{CSKIDs} admit a solution on an open set $\mathcal{U}\subset
\mathcal{S}$, where $\mathcal{S}$ is a spacelike hypersurface on which
initial data for the conformal Einstein field equations has been
prescribed, then there exists a Killing spinor on some open (spacetime) neighbourhood $\mathcal{W}$ of $\mathcal{U}$ contained in the
domain of dependence, $\mathcal{W}\subseteq\mathcal{D}^+(\mathcal{U})$.
\end{main}

   The core of the proof of this theorem is
   obtaining a closed system of homogeneous wave equations for certain fields
   (so called \emph{zero-quantities}) encoding the existence of a Killing spinor. Although these wave
   equations hold regardless of the causal character of $\mathcal{S}$,
   when obtaining conditions intrinsic to $\mathcal{S}$ we assume for conciseness
   that it is spacelike.  A similar
   computation could be performed on a hypersurface $\mathcal{S}$ with
   a different causal character, allowing for applications to
%%% 
   the black hole uniqueness problem in general.
   In the present set up, of a spacelike $\mathcal{S}$,
   the equations derived here have applicability in the asymptotic characterisation
   of the Kerr-de Sitter spacetime ---which would comprise a spinorial analogue of \cite{MarPaeSenSim16}---
   in terms of the existence of a Killing spinor at $\mathscr{I}$. 
%%% 
\medskip 

    Although the main objective of the present paper is the valence-2 Killing spinor case, the analogous conditions encoding the
    existence of a valence-1 Killing spinor ---the \emph{conformal
  twistor initial data equations}--- are also derived.  The
    latter serves as a warm-up exercise for the valence-2 case in which one can already see some of the essential features of the analysis.
%%%

\subsection*{Overview of the article}
  Section \ref{Background} summarises relevant background material:
  subsection \ref{NotationAndSpinorFormalism} fixes the conventions
  and notation and gives an abridged discussion of the main spinorial
  identities to be used and the space spinor formalism; subsection
  \ref{Sec:KillingSpinors} gives an overview of Killing spinors and
  their conformal properties; subsection \ref{Sec:CFEs} introduces the conformal
  Einstein field equations (CFEs).  In section
  \ref{conformalTwistorKID} the conformal
  twistor (i.e. valence-1 Killing spinor) initial data equations are derived. In section \ref{conformalKSKID} the conformal (valence-2) Killing initial data
  equations are derived and discussed.
  
  \medskip 
  
  Many of the more involved computations in this article were
facilitated through the {\tt xAct} suite in {\tt Mathematica}.

\subsection*{Notations and conventions}

Throughout this article, $(\mathcal{M}, \bmg)$ will denote a
4-dimensional manifold equipped with a Lorentzian metric $\bmg$ of
signature $(+, -, -, -)$, with associated Levi-Civita connection
$\nabla$.  Moreover, $(\mathcal{M}, \bmg)$ is assumed to be
globally-hyperbolic. The Upper case Latin indices ~$_{ABC\cdots
  A'B'C'}$~ will be used as abstract indices of the \emph{spacetime
spinor} algebra and $\epsilon_{AB}$ will denote the skew-symmetric spinor metric. The bold numerals ~$_{\bm0\bm1\bm2\cdots}$~
denote components with respect to a fixed spin dyad $ o^A:=
\epsilon_{\bm0}{}^A,\iota^A:=\epsilon_{\bm1}{}^A $ ---see Penrose \&
Rindler \cite{PenRin84} for further details.  Lower case Latin indices
$_{a,b,c...}$ will be used as abstract tensor indices. Our curvature conventions are fixed by
\[\nabla_{a}\nabla_{b}\kappa^c-\nabla_{b}\nabla_{a}\kappa^c=R{}^{c}{}_{dab}\kappa^{d}.\]
The future domain of dependence of an achronal set $\mathcal{A}$ will
be denoted by $\mathcal{D}^{+}(\mathcal{A})$.

\section{Background}
\label{Background}

In this section, we give an recap of spacetime and space spinor
calculus, in addition to giving a brief introduction to Killing
spinors and the conformal Einstein field equations.

\subsection{Spinorial formalism in a nutshell}
\label{NotationAndSpinorFormalism}

Since $(\mathcal{M}, \bmg)$ is, by assumption, globally-hyperbolic and of signature $(+, -, -, -)$, it
admits a spinor structure ---see Proposition $4$ in
\cite{CFEbook}. 
%%%%

\medskip

For spinors, the curvature conventions are fixed via the spinorial
Ricci identities which will be written in accordance with the above
convention for tensors.  Recall that the commutator of
covariant derivatives $[ \nabla_{AA'},\nabla_{BB'}]$ can be expressed
in terms of the symmetric operator $\square_{AB}$ as
\[
[ \nabla_{AA'},\nabla_{BB'}]= \epsilon_{AB}\square_{A'B'} +
\epsilon_{A'B'}\square_{AB},
\]
where
\[
\square_{AB} := \nabla_{Q'(A} \nabla_{B)}{}^{Q'}.
\]
 The action of the symmetric operator $\square_{AB}$ on valence-1
 spinors is encoded in the spinorial Ricci identities
\begin{subequations}
\begin{eqnarray}
&& \square_{AB}\xi_{C}=-\Psi_{ABCD} \xi^{D} +
  2\Lambda\xi_{(A}\epsilon_{B)C},
 \label{SpinorialRicciIdentities1} \\
&& \square_{A'B'}\xi_{C}=-\Phi_{CA A' B'}\xi^{A},
\label{SpinorialRicciIdentities2}
\end{eqnarray}
\end{subequations}
where $\Psi_{ABCD}$, $\Phi_{AA'BB'}$ and $\Lambda$ are the standard curvature spinors of the standard Newmann--Penrose (NP) formalism, namely the Weyl spinor, tracefree Ricci spinor and the scalar curvature\footnote{More precisely, $\Lambda=R/24$, with $R$ the Ricci scalar curvature.}, respectively.  The above identities can be extended to higher valence
spinors in the obvious way; further discussion (albeit using
slightly different conventions) can be found in \cite{Ste91}. A
related identity which will be used in the following
discussion is
\begin{equation}\label{DecomposeDoubleDerivativeContracted}
\nabla_{AQ'}\nabla_{B}{}^{Q'}=\square_{AB}+
\tfrac{1}{2}\epsilon_{AB}\square,
\end{equation}
where $\square_{AB}$ is the symmetric operator defined above and
$\square := \nabla_{AA'}\nabla^{AA'}.$
\medskip 

To keep the discussion self-contained, we briefly recall the space spinor
formalism, originally introduced in \cite{Som80}; see also \cite{GarVal08c,BaeVal10b,CFEbook}.  Let $\tau^{AA'}$
denote the spinorial counterpart of a timelike vector $\tau^{a}$,
normal to a spacelike hypersurface $\mathcal{S}$ and normalised so
that $\tau_{a}\tau^{a}=2$.  Then, it follows that
$\tau_{AA'}\tau^{AA'}=2$ and, consequently,
\[\tau_{AA'}\tau_B{}^{A'}=\epsilon_{AB}.\]
Given a spacetime spinor $u_{AA'}$, its space spinor decomposition
reads
\[
u_{AA'}= \tfrac{1}{2}\tau_{AA'}u-\tau^{B}{}_{A'}u_{AB},
\]
where $u:=\tau^{AA'}u_{AA'}$ and $u_{AB}:=\tau_{(A}{}^{B'}u_{B)B'}$.
This split extends to higher valence spinors in an analogous way
---see \cite{GarVal08c,BaeVal10b,CFEbook}. Similarly, the covariant
derivative $\nabla_{AA'}$ is then decomposed into the \emph{normal}
and \emph{Sen} derivatives:
\begin{align*}
  %%%%
  \nabla_{\bm\tau} := \tau^{AA'}\nabla_{AA'},\qquad \mathcal{D}_{AB}:=
  \tau_{(A}{}^{A'}\nabla_{B)A'}.
\end{align*}
Though we will not need them here, for completeness we note that the
\emph{Weingarten} spinor and the \emph{acceleration} of the congruence
are then defined by
\[K_{ABCD} := \tau_{D}{}^{C'} \mathcal{D}_{AB}\tau_{CC'},\qquad K_{AB} := \tau_{B}{}^{C'} \nabla_{\bm\tau}\tau_{AC'}.
\]
The distribution induced by $\tau_{AA'}$ is integrable if and only
$K^D{}_{(AB)D}=0$, in which case $K_{ABCD}$ describes the extrinsic
curvature of the resulting foliation.
%%%%%%%%%%
The Sen connection is related to the
intrinsic Levi-Civita connection, $D$, as follows
\[\mathcal{D}_{AB}\eta_C = D_{AB}\eta_C + \tfrac{1}{2}K_{ABC}{}^D\eta_D. \]

\subsection{Killing spinors}\label{Sec:KillingSpinors}

%%%% 
Let $(\tilde{\mathcal{M}},\tilde{\bmg})$ also be a
4-dimensional manifold equipped with a Lorentzian metric
$\tilde{\bmg}$ and denote by $\tilde{\nabla}$ its associated
Levi-Civita connection. Later, we will reserve the $~\tilde{\cdot}~$
notation for a vacuum spacetime ---that is to say, a solution of the
vacuum Einstein field equations \eqref{EFEVacuum}. For much of the present section, however, no such restriction is necessary.
\medskip

A totally symmetric
$\tilde{\kappa}_{A_1...A_p}=\tilde{\kappa}_{(A_1...A_p)}$ valence$-p$
spinor is said to be a (valence$-p$) \emph{Killing spinor} if is
satisfies the following equation
\begin{equation}\label{qValenceKillingspinor}
\tilde{\nabla}_{Q'(Q}\tilde{\kappa}_{A_1...A_p)}=0.
\end{equation}
An important property of the Killing spinor equation is that it is
conformally invariant, in other words if $\bmg$ is conformally related
to $\tilde{\bmg}$ ---namely $\bmg=\Xi^2\tilde{\bmg}$--- then
${\kappa}_{A_1...A_p}=\Xi^2 \tilde{\kappa}_{A_1...A_p}$ satisfies
\[{\nabla}_{Q'(Q}{\kappa}_{A_1...A_p)}=0.\]
\medskip
\noindent In this paper we will only focus only the cases $p=1$ and
$p=2$.  If $p=1$, the equation
\begin{equation}\label{TwistorEq}
  \tilde{\nabla}_{Q'(Q}\tilde{\kappa}_{A)}=0.
\end{equation}
is usually referred as the \emph{twistor equation}, and a solution
referred to as a \emph{twistor}; we will follow this naming convention
here.  The valence-2 case, on the other hand, will be referred to
simply as the \emph{Killing spinor case}.  Namely, we will say that a
symmetric valence$-2$ spinor,
$\tilde{\kappa}_{AB}=\tilde{\kappa}_{(AB)}$, is a \textit{Killing
  spinor} if it satisfies the equation
\begin{equation}
\tilde{\nabla}_{A'(A}\tilde{\kappa}_{BC)}=0.
\end{equation}
The Killing spinor equation and twistor equations are, in general,
overdetermined; in particular, they imply the so-called
\textit{Buchdahl constraints}.  In the twistor case ($p=1$), the Buchdahl constraint takes
the form
\[
\tilde{\kappa}^D\Psi_{ABCD}=0,
\]
while in the Killing spinor case ($p=2$) it takes the form
\[
\tilde{\kappa}_{(A}{}^Q\Psi_{BCD)Q}=0,
\]
where $\Psi_{ABCD}$ denotes the Weyl spinor, which is conformally invariant.
This constraint restricts $\Psi_{ABCD}$ to be algebraically special, in particular of Petrov type D, N or O. In the twistor case, which can be considered a degenerate case in which $\kappa_{AB}\kappa^{AB}=0$ (implying that $\kappa_{AB}=\kappa_A\kappa_B$ for some twistor $\kappa_A$), the spacetime is necessarily of Petrov type N or
O, hence restricting its utlity in black
hole characterisation. On the other hand, given a vacuum spacetime $(\tilde{\mathcal{M}},\tilde{\bmg})$ of Petrov type D,
there is an explicit
formula for a Killing spinor: choosing an adapted dyad $\lbrace \bmo,
\bm\iota\rbrace$ for which $\psi_{ABCD}=\psi
o_{(A}o_{B}\iota_C\iota_{D)}$, the following expression
\[\tilde{\kappa}_{AB} = \psi^{-1/3}o_{(A}\iota_{B)}\]
yields a Killing spinor. Indeed, the fact that the Killing spinor equation is
satisfied follows from the vacuum Bianchi identity
$\tilde{\nabla}^A{}_{A'}\Psi_{ABCD}=0$ by a short calculation ---see
\cite{PenRin84, WalkerPenrose70} for more details.
\medskip

Although the Killing spinor equation is conformally invariant, note that one cannot simply transcribe the analysis of \cite{GarVal08c,
  BaeVal10b} into the conformal setting since we would like to allow for the possibility that $\mathcal{S}\cap\mathscr{I}\neq \emptyset$; points in $\mathscr{I}$ do not have corresponding points in the physical spacetime.
Moreover, we shall see that the method given here departs substantially from that of \cite{GarVal08c,
  BaeVal10b}, owing in part to the fact that the Einstein field
equations are not conformally invariant, an important difference being that the set of
\emph{zero-quantities} used to encode the existence of a Killing spinor are different. The results of
\cite{GarVal08c} can however be recovered from the analysis presented here by
setting $\Xi = 1$. The need for a different
set of Killing spinor zero-quantities in the
conformal case can be traced back to the observation that in
$(\mathcal{M},\bmg)$ the vector
$\xi_{AA'}=\nabla_{A'}{}^{Q}\kappa_{QA}$ does not correspond to a Killing (or even a \emph{conformal} Killing) vector. Although for a general Lorentzian manifold
this vector appears not to have any clear geometric significance, a
by-product of the present analysis is that, for conformally Einstein
manifolds (i.e. solutions to the CFEs)
the vector $\xi_{AA'}$ represents a \emph{collineation} of the rescaled Weyl curvature ---see \cite{KatLevDav69} for definitions of curvature
collineations.  Once the existence of a Killing spinor is established
one can use the conformal factor $\Xi$, the Killing spinor
$\kappa_{AB}$ and $\xi_{AA'}$ to construct a conformal Killing vector
$X_{a}$ associated to a Killing vector $\tilde{X}_{a}$ of the physical
spacetime $(\tilde{\mathcal{M}},\tilde{\bmg})$ ---see Remark \ref{GeometricSignificance}, later. In the analysis of
\cite{GarVal08c}, the fact that
$\tilde{\xi}_{AA'}=\tilde{\nabla}_{A'}{}^{Q}\tilde{\kappa}_{QA}$ is a
Killing vector is crucial;
%%%%  
indeed, it motivates the introduction of
$\tilde{S}_{ab} := \tilde{\nabla}_{(a}\tilde{\xi}_{b)}$ as a
zero-quantity.  Similarly,
in the work of \cite{ValCol16}, where the results of \cite{GarVal08c}
are generalised to the case where $(\tilde{\mathcal{M}},\tilde{\bmg})$
satisfies the Einstein-Maxwell equations, the condition
$\tilde{S}_{ab}=0$ is satisfied by virtue of an assumed \emph{matter
alignment condition}. In the conformal setting of interest in this article, the analogous quantity $S_{ab}$ is not as geometrically
motivated as in the physical cases and its usage as a variable in the
system does not lead to a closed system of explicitly regular
homogeneous wave equations. Again, the adjective ``regular" refers to the absence of formally singular terms, such as $\Xi^{-1}$, in the
equations. Instead, the quantity that is central for the present
analysis turns out to be the so-called \emph{Buchdahl zero-quantity} (and
derivatives thereof), the vanishing of which relates the existence of Killing spinors with the Petrov type of $(\mathcal{M},\bmg)$, in line with the above discussion.

%%%%%%%%%%%
\begin{remark}
  \emph{ The notion of Killing spinors is related to that of
  Killing--Yano tensors. Given a Killing spinor $\tilde{\kappa}_{AB}$, if the quantity $\tilde{\xi}_{AA'}$ is
  Hermitian
  %i.e. $\bar{\tilde{\xi}}_{AA'}=\tilde{\xi}_{AA'}$, 
  then
  one can construct the spinorial counterpart of a \emph{Killing--Yano
  tensor} $\tilde{\Upsilon}_{ab}$ ---i.e. an antisymmetric $2-$tensor
  satisfying $\tilde{\nabla}_{(a}\tilde{\Upsilon}_{b)c}=0$--- as
  follows
\[\tilde{\Upsilon}_{AA'BB'}=i(\tilde{\kappa}_{AB}\bar{\tilde{\epsilon}}_{A'B'}
-\bar{\tilde{\kappa}}_{A'B'}\tilde{\epsilon}_{AB}).\] 
Conversely,
given a Killing--Yano tensor, one can construct a Killing spinor
---see \cite{ValCol16,McLBer93,PenRin86}.  The (Killing--Yano) tensor
counterpart of the Killing spinor initial data result of \cite{GarVal08c}
has been recently derived in \cite{GarKha19a}.  }
\end{remark}
From now on, $(\tilde{\mathcal{M}},\tilde{\bmg})$ will be reserved for the \emph{physical spacetime},
%%%
%%%
while $(\mathcal{M},\bmg)$ will refer to the \emph{unphysical spacetime}, related to
$(\tilde{\mathcal{M}},\tilde{\bmg})$ via $\bmg=\Xi^2\tilde{\bmg}$ ---as is customary, in a slight abuse of notation, the pullback $\varphi^*$ of the
 embedding $\varphi:
\tilde{\mathcal{M}}\rightarrow\mathcal{M}$ will be omitted.

\subsection{The conformal Einstein field equations}
\label{Sec:CFEs}

The conformal Einstein field equations (CFEs), first given in \cite{Fri81a}, are a conformal
reformulation of the Einstein field equations. In other words, given a
spacetime $(\tilde{\mathcal{M}},\tilde{\bmg})$ satisfying the Einstein
field equations, the CFEs encode a system of implied differential conditions for the curvature and
concomitants of the conformal factor associated with
$(\mathcal{M},\bmg)$ where $\bmg=\Xi^2\tilde{\bmg}$. As mentioned in the introduction, the key property
of these equations is that they are regular even at null infinity
$\mathscr{I}$, where $\Xi=0$ ---see also \cite{CFEbook} for a
comprehensive discussion.

\medskip

The so-called ``metric" version of the standard vacuum conformal Einstein field
equations are encoded in the following zero-quantities ---see
\cite{Fri81a,Fri81b,Fri82,Fri83}:
\begin{subequations}\label{CFE_tensor_zeroquants}
\begin{eqnarray}
&& Z_{ab} := \nabla_{a}\nabla_{b}\Xi +\Xi L_{ab} - s g_{ab}=0 ,
 \label{StandardCEFEsecondderivativeCF}\\
&& Z_{a} := \nabla_{a}s +L_{ac} \nabla ^{c}\Xi=0
 , \label{standardCEFEs}\\ && \delta_{bac} :=
 \nabla_{b}L_{ac}-\nabla_{a}L_{bc} - d_{abcd}\nabla^d{}\Xi =0
 , \label{standardCEFESchouten}\\ && \lambda_{abc}:=
 \nabla_{e}d_{abc}{}^{e}=0 , \label{standardCEFErescaledWeyl}\\ && Z
 := \lambda - 6 \Xi s + 3 (\nabla_{a}\Xi) \nabla^{a}\Xi,
\label{standardCFEconstraintFriedrichScalar}
\end{eqnarray}
\end{subequations}
where $\Xi$ is the conformal factor, $L_{ab}$ is the Schouten tensor,
defined in terms of the Ricci tensor $R_{ab}$ and the Ricci scalar $R$
via
\begin{equation}\label{SchoutenDefinition}
L_{ab}=\tfrac{1}{2}R_{ab}-\tfrac{1}{12}Rg_{ab},
\end{equation}
 $s$ is the so-called \emph{Friedrich scalar} defined as
\begin{equation}\label{s-definition}
s:= \tfrac{1}{4}\nabla_{a}\nabla^{a}\Xi + \tfrac{1}{24}R\Xi,
\end{equation}
and $d^{a}{}_{bcd}$ denotes the \emph{rescaled Weyl tensor}, defined
as
\[d^{a}{}_{bcd}=\Xi^{-1}C^{a}{}_{bcd},\]
where $C^{a}{}_{bcd}$ denotes the Weyl tensor.  The geometric meaning
of these zero-quantities is as follows. The equation $Z_{ab}=0$
encodes the conformal transformation law between ${R}_{ab}$ and
$\tilde{R}_{ab}$.  The equation $Z_{a}=0$ is obtained considering
$\nabla^{a}Z_{ab}$ and commuting covariant derivatives.  Equations
$\delta_{abc}=0$ and $\lambda_{abc}=0$ encode the contracted second
Bianchi identity. Finally, $Z=0$ is a constraint in the sense that if
it is verified at one point $p\in\mathcal{M}$ then $Z=0$ holds in
$\mathcal{M}$ by virtue of the previous equations.  A solution to the
metric conformal Einstein field equations consists of a collection of
fields
%%%%
\[
\{g_{ab}, \; \Xi, \; s\;,L_{ab},\; d_{abcd}\}
\]
satisfying
\begin{equation}\label{vanishing_CFEs_tensorial_zq}
  Z_{ab}=0, \quad Z_{a}=0, \quad \delta_{abc}=0, \quad \lambda_{abc}=0, \quad Z=0.
\end{equation}

 \begin{remark}
   \emph{ If one opts to use the Ricci tensor $R_{ab}$ instead of the
     Schouten tensor $L_{ab}$ then the Ricci scalar $R$ appears in the
     right-hand side of equations but no equation for it has been
     provided.  In the CFEs the Ricci scalar encodes the
     \emph{conformal gauge source function}, hence there is no
     equation to fix that variable as it represents a gauge quantity.}
 \end{remark}

\noindent Since we are concerned here with spinor fields, we will need
the spinorial transcription of the CFEs (see \cite{CFEbook}), which reads
\begin{subequations}
\begin{eqnarray}
   && Z_{AA'BB'} = \nabla_{AA'}\nabla_{BB'}\Xi - \Xi \Phi _{ABA'B'} -
  s \epsilon _{AB} \epsilon _{A'B'} + \Xi \Lambda \epsilon _{AB}
  \epsilon _{A'B'} ,
  \label{Def_ConfFactor_CFE_zeroquant}\\
  && Z_{AA'} = \nabla_{AA'}s + \Lambda \nabla_{AA'}\Xi - \Phi
  _{ABA'B'} \nabla^{BB'}\Xi ,\label{Def_s_CFE_zeroquant}\\ &&
  \delta_{ABCC'} = \nabla_{A'(A}\Phi _{B)CC'}{}^{A'} - \epsilon _{C(A}
  \nabla_{B)C'}\Lambda + \phi _{ABCD} \nabla^{D}{}_{C'}\Xi
  ,\label{Def_delta_CFE_zeroquant} \\ && \Lambda _{C'ABC} =
  \nabla_{DC'}\phi _{ABC}{}^{D}, \label{Def_Lambda_CFE_zeroquant}\\ &&
  Z = \lambda -6 \Xi s + 3 (\nabla_{AA'}\Xi)
  \nabla^{AA'}\Xi, \label{Def_cons_CFE_zeroquant}
\end{eqnarray}
\end{subequations}
where $\Phi_{ABA'B'}$ and $\Lambda$ are as in section \ref{NotationAndSpinorFormalism} and the Weyl spinor enters via
the \emph{rescaled Weyl spinor}, $\phi_{ABCD}$, defined as
\begin{equation}\label{Def_rescaled_Weyl_spinor}
\phi_{ABCD} := \Xi^{-1} \Psi_{ABCD}
\end{equation}
---see \cite{Ste91, PenRin84} for more details.  As in the tensorial case, one
can choose the Schouten (tensor) spinor or the Ricci (tensor) spinor
as a variable.  
%%%%%%%%%
 \begin{remark}
 \label{G-C-M-Remark}
   \emph{In the initial value problem for the CFEs, $\phi_{ABCD}$ is
   determined by the initial data on a spacelike hypersurface
   $\mathcal{S}$. First decompose the rescaled Weyl spinor as
   \[ \phi_{ABCD}=E_{ABCD}+iB_{ABCD}\]
where $\bmE$ and $\bmB$ are the \textit{electric} and
\textit{magnetic} parts
   \[
   E_{ABCD} := \tfrac{1}{2}(\phi_{ABCD} + \hat{\phi}_{ABCD}),\qquad
   B_{ABCD} := \tfrac{i}{2}(-\phi_{ABCD} + \hat{\phi}_{ABCD}),
  \]
  with
  $\hat{\phi}_{ABCD}:=\tau_A{}^{A'}\tau_B{}^{B'}\tau_C{}^{C'}\tau_D{}^{D'}\bar{\phi}_{A'B'C'D'}$. The
  fields $\bmE,\bmB$ are the spinorial counterparts of the electric
  and magnetic parts of the rescaled Weyl tensor and comprise (part of
  the) initial data: away from $\mathscr{I}$ they are determined by
  a conformal analogue of the Gauss--Codazzi--Mainardi equations, while for the asymptotic
  initial value problem the constraint equations implied by the CFEs
  acquire a particularly simple form so that initial data for the
  magnetic part is determined algebraically by the Bach tensor of the
  induced metric $h_{ij}$ and the electric part must be prescribed,
  the only constraint being that it satisfies the
  \textit{TT} condition with respect to $h_{ij}$---see \cite{CFEbook,
    GasVal17a}.  For the discussion of this paper we will assume such
  data $\bmE, \bmB$ (and hence $\bm\phi$) to be given. Note also that one can
  formally define the Petrov type of an initial data set for the CFEs
  by applying the Petrov classification to the initial datum
  $\phi_{ABCD}|_{\mathcal{S}}$, rather than to $\Psi_{ABCD}$ ---see
  \cite{Ste91} for an introduction to the Petrov classification.  }
\end{remark}
The CFEs as previously presented can be regarded as a set of covariant
conditions for geometric fields on $(\mathcal{M},\bmg)$ and, hence,
they do not have a particular PDE character. However, there are
various hyperbolic reduction strategies, depending on the choice of
gauge fixing procedure, for extracting a set of evolution and
constraint equations.  For the subsequent discussion only the
evolution and constraint equations implied by the $\Lambda_{C'ABC}=0$
equation, namely,
\begin{equation}\label{RescaledWeylEquationDisplayed}
 \nabla^D{}_{C'}\phi _{ABCD}=0,
\end{equation}
will play an important role.  A direct calculation using the space
spinor formalism shows that equation
\eqref{RescaledWeylEquationDisplayed} can be recast as the following
system of evolution and constraint equations
\begin{align}\label{RescaledWeyl_evo_const}
  & \nabla_{\bm\tau} \phi _{ABCD} = 2 \mathcal{D}
  _{(A}{}^{F}\phi_{BCD)F}, \qquad \mathcal{D} ^{CD}\phi _{CDAB} = 0.
\end{align}
The CFEs can also be recast as a second-order system of wave
equations; see \cite{Pae13} for the tensorial formulation and
\cite{GasVal15} for the spinorial formulation.  We shall only need one
of these wave equations here, namely
\begin{eqnarray}
  \square \phi _{ABCF} = 12 \Lambda \phi _{ABCF} -6 \Xi \phi
  _{(AB}{}^{DG}\phi _{CF)DG},
  \label{Wave_eq_CFE_Weyl}
\end{eqnarray}
which is derived by considering $\nabla_D{}^{C'}\Lambda _{C'ABC}=0$ and
applying identity \eqref{DecomposeDoubleDerivativeContracted}.  It is
worth noting here that one of the tools used in \cite{GasVal15} to
show the equivalence between the system
\eqref{Def_ConfFactor_CFE_zeroquant}--\eqref{Def_cons_CFE_zeroquant}
and their wave-equation counterpart is the uniqueness property of
solutions to a certain class of \textit{homogeneous} wave equations, a
result which we shall also use repeatedly in this article and which is given below in
Theorem \ref{TheoremHomogeneousWave}.

\begin{definition}
{\em An operator $h$ is said to be \textit{homogeneous in} $\underline{u}$
\textit{and} $\partial\underline{u}$, if
$h (\mu\underline{u},\mu\partial\underline{u})=\mu
h(\underline{u}, \partial\underline{u})$ for all
$\mu\in\mathbb{C}$. }
\end{definition}

\begin{theorem}
\label{TheoremHomogeneousWave}
 Let $\mathcal{M}$ be a smooth manifold equipped with a Lorentzian
 metric $\bmg$ and consider the wave equation
\[\square \underline{u}=h (\underline{u},\partial\underline{u})\]
where $\underline{u}\in\mathbb{C}^m$ is a complex vector-valued
function on $\mathcal{M}$, $h:\mathbb{C}^{2m}\rightarrow\mathbb{C}^m$
is a smooth homogeneous function of its arguments and
$\square=g^{ab}\nabla_{a}\nabla_{b}$.  Let
$\mathcal{U}\subset\mathcal{S}$ be an open set and $\mathcal{S}\subset
\mathcal{M}$ be a spacelike hypersurface with normal $\tau^{a}$
respect to $\bmg$. Then the Cauchy problem
\begin{align*}
\square \underline{u}&=h (\underline{u},
\partial\underline{u}),\\ \underline{u}\left|_{\mathcal{U}}\right.&=\underline{u}_0,
\quad
\nabla_{\bm\tau}\underline{u}\left|_{\mathcal{U}}\right.=\underline{u}_1,
\end{align*} 
where $\underline{u}_{0}$ and $\underline{u}_{1}$ are smooth on
$\mathcal{U}$ and $\nabla_{\bm\tau}:= \tau^\mu\nabla_\mu$, has a
unique solution $\underline{u}$ in an open neighbourhood $\mathcal{W}$ of
$\mathcal{U}$, with $\mathcal{W} \subseteq\mathcal{D}^+(\mathcal{U})$.
\end{theorem}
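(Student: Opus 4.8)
The statement is a standard existence-and-uniqueness result for a symmetric hyperbolic (or rather, strictly hyperbolic second-order) quasi-linear system on a Lorentzian manifold, and the natural strategy is to reduce it to the classical theory of wave equations. The plan is to rewrite the single vector-valued equation $\square \underline{u} = h(\underline{u}, \partial \underline{u})$ as a first-order symmetric hyperbolic system and then invoke the standard local existence, uniqueness and domain-of-dependence theorems for such systems (as in, e.g., the references on the Cauchy problem for the Einstein field equations cited in the paper). First I would observe that, since $h$ is homogeneous of its arguments, the function $\underline{u}\equiv 0$ is a solution; this is not needed for the present theorem but highlights that homogeneity plays no role in existence — the hypothesis ``homogeneous'' is only needed later, in the application, to conclude that vanishing data propagate to a vanishing solution. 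For the present statement only smoothness of $h$ matters.

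The key steps, in order, are as follows. First, choose Gaussian normal coordinates (or any foliation adapted to $\mathcal{S}$) in a neighbourhood of $\mathcal{U}$ so that $\tau^a$ is the unit normal and $\mathcal{P} = \tau^\mu \nabla_\mu$ is the transverse derivative; in these coordinates $\square$ has the form $g^{00}\partial_0^2 + (\text{lower order in }\partial_0)$ with $g^{00}\neq 0$, so the principal symbol is that of the scalar wave operator and the system is non-characteristic with respect to $\mathcal{S}$. Second, introduce the first-order variables $\underline{v}_0 = \underline{u}$, $\underline{v}_\mu = \nabla_\mu \underline{u}$; the wave equation together with the integrability relations $\nabla_\mu \underline{v}_\nu - \nabla_\nu \underline{v}_\mu = 0$ (which follow from the definition and the symmetry of the Levi-Civita connection) yields a first-order system whose principal part is symmetric hyperbolic relative to the $\bmg$-timelike covector $\tau_a$, precisely because $\bmg$ is Lorentzian and $\mathcal{S}$ is spacelike. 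Third, apply the standard local existence theorem for quasi-linear symmetric hyperbolic systems with smooth initial data $(\underline{u}_0, \underline{u}_1)$ on $\mathcal{U}$ to obtain a smooth solution $\underline{v}$ on a neighbourhood of $\mathcal{U}$ in $\mathcal{M}$; one then checks, using the constraint-propagation argument for the integrability relations, that $\underline{v}_\mu = \nabla_\mu \underline{v}_0$ holds throughout, so $\underline{u} = \underline{v}_0$ solves the original wave equation with the prescribed data. Fourth, invoke the finite-propagation-speed / domain-of-dependence property of symmetric hyperbolic systems to conclude that the solution exists on (and is unique in) the domain of dependence $D(\mathcal{U})$, and that uniqueness holds there.

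The main obstacle — or rather the one step requiring genuine care rather than a citation — is the reduction to a genuinely symmetric hyperbolic \emph{first-order} system together with the verification that the auxiliary constraints $\nabla_\mu \underline{v}_\nu = \nabla_\nu \underline{v}_\mu$ are propagated by the reduced evolution; this is what guarantees that the solution of the enlarged first-order system actually descends to a solution of the original second-order equation, and it is the place where the curvature of $\bmg$ enters (through the commutator $[\nabla_\mu,\nabla_\nu]$ acting on $\underline{v}$, which must be expressed back in terms of $\underline{v}$ itself). Everything else — local existence, smoothness, uniqueness, and the domain-of-dependence statement — is then a direct application of the classical theory, and I would simply cite the appropriate reference (for instance the treatments of the Cauchy problem for wave-type equations in the literature on the conformal Einstein field equations already referenced in the paper) rather than reproving it.
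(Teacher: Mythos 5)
The paper offers no proof of this theorem: it simply refers the reader to \cite{CFEbook,Tay96c} and to Theorem 1 of \cite{GarVal08c}. Your outline --- reduction to a first-order symmetric hyperbolic system, propagation of the integrability constraints, and then the classical local existence, uniqueness and domain-of-dependence theory (with the correct observation that homogeneity of $h$ is irrelevant here and only matters later for propagating zero data) --- is exactly the standard argument supplied by those references, so your approach coincides with the paper's.
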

We refer the reader to Proposition 3.2 of \cite{Tay96c} for a proof.

\begin{remark}{\em 
Analogous to the physical case, given a Petrov type D $\phi_{ABCD}$,
one can give an explicit construction of a Killing spinor, namely
\begin{equation*}
\kappa_{AB} = \phi^{-1/3}o_{(A}\iota_{B)}
\end{equation*}
in terms of the relevant adapted spin dyad $\lbrace \bmo,
\bm\iota\rbrace$. This can be seen directly by noting that the
equation \eqref{RescaledWeylEquationDisplayed} satisfied by the
rescaled Weyl spinor $\phi_{ABCD}$ is formally identical to the
physical vacuum Bianchi constraint and so the same computations as in
\cite{WalkerPenrose70} follow through.  An alternative approach to the
construction of Killing spinor initial data equations would be to
attempt to determine under what conditions the Petrov type of the
(rescaled) Weyl tensor restricted to $\mathcal{S}$ is propagated into
the spacetime development. Later, we shall see that such a result
follows, rather, as a \emph{product} of our analysis ---see Corollary
\ref{Corollary:PetrovPropagation}.}\label{Remark:DyadExpressionForKillingSpinorInTypeD}
\end{remark}

  \section{Conformal twistor initial data}
  \label{conformalTwistorKID}
  In this section, the conformal twistor initial data equations are
  derived. Although the main result of this article is the valence-2 (Killing spinor) case, the twistor case illustrates the main features of the calculation for the Killing spinor case (given in section \ref{conformalKSKID}) in a simpler setting.
\subsection{Twistor zero-quantities}
\label{Sec:TwistorZeroQuantities}

For the following discussion is convenient to define the following
\emph{zero-quantities}
\begin{subequations}
  \begin{eqnarray}
   && H_{A'AB} := 2
    \nabla_{A'(A}\kappa_{B)},\label{Def_H_twistor}\\ && B_{ABC}
    := \phi_{ABCD}\kappa^D.\label{Def_B_twistor}
    \end{eqnarray}
\end{subequations}
The spinors $H_{A'AB}$ and $B_{ABC}$ will be denoted in index free
notation as $\bmH$ and $\bmB$ and will be called the twistor
zero-quantity and the Buchdahl zero-quantity, respectively.  The
Buchdahl zero-quantity arises as an integrability condition of the
twistor equation.  To see this, notice that, taking the following
derivative of $\bmH$ and substituting definition
\eqref{Def_H_twistor}, one obtains
  \begin{equation}\label{curl_H_twistor}
  \nabla_{AA'}H^{A'}{}_{BC}= 2 \nabla_{AA'}\nabla_{(B}{}^{A'}\kappa
  _{C)} = \tfrac{1}{2} \epsilon _{AB} \square \kappa _{C}  +
  \tfrac{1}{2}  \epsilon _{AC} \square \kappa _{B} +
  \square_{BA}\kappa _{C} + \square_{CA}\kappa _{B}.
  \end{equation}
  Symmetrising and using equation \eqref{SpinorialRicciIdentities1} gives
  \[
  \nabla_{(A|A'|}H^{A'}{}_{BC)}= - 2\Psi_{ABCD}\kappa^D.
  \]
  The vanishing of the right-hand side of latter equation encodes the
  Buchdahl constraint, namely the fact that if $(\mathcal{M},\bmg)$
  admits a twistor then it is necessarily of Petrov type N or O. To
  write this in terms of the variables appearing in the CFEs,
  we substitute the definition of the rescaled Weyl spinor
  to obtain
  \begin{equation}\label{Curl_H_sym_toB_twistor}
  \nabla_{(A}{}^{A'}H_{|A'|BC)} = 2\Xi B_{ABC}.
  \end{equation}
  It is clear that if the unphysical spacetime
  $(\mathcal{M},\bmg)$ admits a twistor  then $H_{A'AB}=B_{ABC}=0$. 
  
\subsection{Auxiliary quantities and the twistor candidate equation}
  
The following auxiliary quantities
\begin{subequations}\label{def_twistor_aux_quants}
  \begin{eqnarray}
      && Q_{A}  := \nabla^{QA'}H_{A'QA}, \label{def_Q_twistor} \\
      && \xi_{A'} := \nabla^B{}_{A'}\kappa_B. \label{def_xi_twistor}
  \end{eqnarray}
\end{subequations}
will prove to be a useful bookkeeping device for the subsequent
calculations.
  The spinor $\xi_{A'}$ is merely a convenient
  placeholder for making irreducible decompositions of derivatives of
  $\kappa_A$:
  \begin{align}\label{decomp_Der_kappa}
    \nabla_{AA'}\kappa _{B} & = \tfrac{1}{2} \epsilon _{AB}
    \nabla_{CA'}\kappa ^{C} + \nabla_{(A|A'|}\kappa _{B)} = \tfrac{1}{2} H_{A'AB} - \tfrac{1}{2} \xi _{A'} \epsilon_{AB}.
  \end{align}
It is illustrative to introduce this
  shorthand since the analogous quantity in the Killing spinor case
  will play an important role in the calculation. On the other hand, the auxiliary quantity
  $Q_A$ will be central for the following discussion since it
  encodes a wave equation for $\kappa_A$. To see this, observe that
  tracing the identity \eqref{curl_H_twistor} and substituting 
  definition \eqref{def_Q_twistor} gives
\begin{equation}\label{Q_to_box_twistor_candidate}
Q_{A} = \tfrac{3}{2} \square \kappa _{A} + 3 \Lambda \kappa _{A}.
\end{equation}
%%%%%%
Hence, $Q_{A}=0$ encodes the following wave equation for $\kappa_A$:
\begin{align} \label{Wave_eq_twistor_candidate}
\square \kappa_{A} + 2 \Lambda  \kappa_{A} =0.
\end{align}
A valence-1 spinor $\kappa_A$ satisfying \eqref{Wave_eq_twistor_candidate} will be called a \emph{twistor candidate}. To understand the
motivation for this definition and its name, notice that in general
any twistor trivially satisfies the twistor candidate
equation but the converse is not necessarily true:
\[
\bmH=0 \implies \bmQ =0, \qquad \text{but in general} \qquad \bmQ =0 \notimplies \bmH=0.
\]
However,  the initial data $(\kappa_A, \nabla_{\bm\tau}
\kappa_A)|_{\mathcal{S}}$ for the wave equation
\eqref{Wave_eq_twistor_candidate} have not yet been fixed. The present aim is to determine conditions on the twistor candidate initial data, on an initial hypersurface $\mathcal{S}$, which if propagated off $\mathcal{S}$ using
equation \eqref{Wave_eq_twistor_candidate} ensure that the corresponding twistor
candidate $\kappa_A$ is indeed a twistor ---i.e. such that
\begin{equation}
\bmQ =0 \;\;\&\;\; \text{twistor initial data} \;\;\implies \bmH=0.
\end{equation}
The strategy for obtaining such conditions on the initial data
$(\kappa_A, \nabla_{\bm\tau}
\kappa_A)|_{\mathcal{S}}$ will be to derive a closed
system of homogeneous wave equations for the zero-quantities $\bmH$
and $\bmB$ such that, if trivial initial data is given:
\begin{equation}\label{TrivialInitialData}
H_{A'AB}=0, \qquad \nabla_{\bm\tau} H_{A'AB}=0, \qquad B_{ABC}=0,
\qquad \nabla_{\bm\tau} B_{ABC}=0 \qquad \text{on} \qquad
\mathcal{U}\subset\mathcal{S}
\end{equation}
then Theorem \ref{TheoremHomogeneousWave} will guarantee that
$\bmH=0$ and $\bmB=0$ on some open neighbourhood $\mathcal{W}$ of $\mathcal{S}$, with
$\mathcal{W}\subseteq \mathcal{D}^{+}(\mathcal{S})$. Conditions \eqref{TrivialInitialData} will imply the desired restrictions, the \emph{conformal twistor initial data equations}, that must be satisfied by $\kappa_A$.  

\subsection{Wave equations for the zero-quantities}

To derive a wave equation for the zero-quantity $\bmH$, we start with the irreducible decomposition
\[
\nabla_{D}{}^{A'}H_{A'AB} = \tfrac{1}{3} \epsilon _{BD}
\nabla_{CA'}H^{A'}{}_{A}{}^{C} + \tfrac{1}{3} \epsilon _{AD}
\nabla_{CA'}H^{A'}{}_{B}{}^{C} + \nabla_{(A}{}^{A'}H_{|A'|BD)}.
\]
Substituting the definition \eqref{def_Q_twistor} and equation
\eqref{Curl_H_sym_toB_twistor}, it follows that
\begin{align}\label{derH_twistor_toBandQ}
\nabla_{D}{}^{A'}H_{A'AB} = 2 \Xi B_{ABD}  + \tfrac{1}{3} Q_{B}
\epsilon _{AD} + \tfrac{1}{3} Q_{A} \epsilon _{BD}.
\end{align}
Applying $\nabla_{D}{}^{B'}$ to the last expression, and using
identity \eqref{DecomposeDoubleDerivativeContracted} along with the
spinorial Ricci identities
\eqref{SpinorialRicciIdentities1}--\eqref{SpinorialRicciIdentities2},
renders
\begin{equation}\label{wave_H_twistor}
  \square H_{B'AB} = 6 \Lambda H_{B'AB} + 4 \Xi
  \nabla_{DB'}B_{AB}{}^{D} -4 B_{ABD} \nabla^{D}{}_{B'}\Xi -4
  \Phi_{(A}{}^{D}{}_{|B'}{}^{A'}H_{A'|B)D} + \tfrac{4}{3}
  \nabla_{(A|B'|}Q_{B)}.
\end{equation}
To derive a wave equation for $\bmB$, on the other hand, one begins by applying the D'Alembertian
operator $\square$ to the definition \eqref{Def_B_twistor} to obtain
\begin{align}\label{pre_wave_B_twistor}
\square B_{ABC} = \kappa ^{D} \square \phi _{ABCD} + \phi _{ABCD}
\square \kappa ^{D} + 2 (\nabla_{FA'}\phi _{ABCD}) \nabla^{FA'}\kappa
^{D}.
\end{align}
Substituting the definition \eqref{Def_B_twistor}, the identity
\eqref{Q_to_box_twistor_candidate}, and the wave equation satisfied by
the rescaled Weyl spinor \eqref{Wave_eq_CFE_Weyl} into the last
expression gives
\begin{equation}\label{wave_B_twistor}
\square B_{ABC} = 10 \Lambda B_{ABC} + H^{A'DF} \nabla_{FA'}\phi
_{ABCD} -6 \Xi B_{(A}{}^{DF}\phi _{BC)DF} + \tfrac{2}{3} \phi _{ABCD}
Q^{D}.
\end{equation}
Observe that if $Q_{A}=0$, namely if the twistor candidate wave
equation is imposed, then $\bmH$ and $\bmB$ satisfy the following set
of wave equations
\begin{subequations}
\begin{eqnarray}
  && \square H_{B'AB} = 6 \Lambda H_{B'AB} + 4 \Xi
  \nabla_{DB'}B_{AB}{}^{D}  -4 B_{ABD} \nabla^{D}{}_{B'}\Xi   -4 \Phi_{(A}{}^{D}{}_{|B'}{}^{A'}H_{A'|B)D},\qquad
   \label{Hom_wave_HandB1} \\
 && \square B_{ABC} = 10\Lambda B_{ABC} + H^{A'DF} \nabla_{FA'}\phi _{ABCD}  -6 \Xi B_{(A}{}^{DF}\phi
_{BC)DF}.  \label{Hom_wave_HandB2}
\end{eqnarray}
\end{subequations}
Notice that the only place where the CFEs (in their wave equation
form) were used is in substituting for the $\square \phi _{ABCD}$ term
in equation \eqref{pre_wave_B_twistor}.  \\

The important observation about equations
\eqref{Hom_wave_HandB1}--\eqref{Hom_wave_HandB2} is that they comprise
a closed system of \emph{regular and homogeneous} wave equations for
$\bmH$ and $\bmB$. Hence, we have the following:
%%%%%%
\begin{proposition}\label{Prop:Propagation_twistor}
  Given initial data for the conformal Einstein field equations on
  $\mathcal{U}\subset\mathcal{S}$ where $\mathcal{S}$ is a spacelike
  hypersurface $\mathcal{S}$ with normal vector $\tau^{AA'}$, then a
  twistor candidate on $\mathcal{D}^{+}(\mathcal{U})$, is a true
  twistor (valence-1 Killing spinor) on an open neighbourhood
  $\mathcal{W}$ of $\mathcal{U}$, with
  $\mathcal{W}\subseteq\mathcal{D}^{+}(\mathcal{U})$, if and only if
\begin{subequations}
\begin{eqnarray}
  &&
  H_{A'AB}=B_{ABC}=0,\label{eq:VanishingOfH_twistor}\\ &&\nabla_{\bm\tau}H_{A'AB}=\nabla_{\bm\tau}
  B_{ABC}=0, \label{eq:VanishingOfNormalDerivB_twistor}
\end{eqnarray}
\end{subequations}
hold on $\mathcal{U}$.
\end{proposition}
\begin{proof}
The \emph{only if} direction is immediate. Suppose, on the other hand,
that $\kappa_A$ is a twistor candidate satisfying
\eqref{eq:VanishingOfH_twistor}--\eqref{eq:VanishingOfNormalDerivB_twistor}
on $\mathcal{U}\subset\mathcal{S}$.  As the zero-quantities
$H_{A'AB},~B_{ABC}$ satisfy the homogeneous wave equations
\eqref{Hom_wave_HandB1}--\eqref{Hom_wave_HandB2} then the uniqueness
result for homogeneous wave equations, given in Theorem
\ref{TheoremHomogeneousWave}, ensures that
\[ H_{A'AB}=0,\qquad B_{ABC}=0,\]
in an open neighbourhood $\mathcal{W}$ of $\mathcal{U}$, with
$\mathcal{W}\subseteq\mathcal{D}^{+}(\mathcal{U})$.  In other words, $\kappa_{A}$
solves the twistor equation in $\mathcal{W}$.
\end{proof}

\begin{remark}
\em{One important difference with \cite{GarVal08c}, in which the twistor equations are derived on a vacuum spacetime $(\tilde{\mathcal{M}},\tilde{\bmg})$, is that there the wave system
  closes with $\tilde{H}_{A'AB}$ alone and there is no need to
  introduce the analogous physical Buchdahl zero-quantity
  $\tilde{B}_{ABC}$.  Therefore it is interesting to check if in the
  conformal case one can also close the system with $H_{A'AB}$ alone.
  Observe that if one substitutes the expression for the Buchdahl
  constraint into equation \eqref{Hom_wave_HandB1} and uses the CFEs, then 
  one obtains
\begin{multline}\label{WaveH_twistor_singular}
  \square H_{A'AB} = - 2\Xi^{-1} (\nabla^{C}{}_{A'}\Xi)
  \nabla_{(A}{}^{B'}H_{|B'|BC)}+ 6 \Lambda H_{A'AB} \\-2 \Xi \phi
  _{ABCD}H_{A'}{}^{CD} -4 \Phi_{(A}{}^{C}{}_{|A'}{}^{B'}H_{B'|B)C}.
 \end{multline}
Hence, $\bmH$ satisfies a closed and homogeneous, though \emph{singular}, equation
due to the $\Xi^{-1}$ coefficient. Theorem
\ref{TheoremHomogeneousWave} does not apply in this case.  From equation
\eqref{WaveH_twistor_singular} one can recover the analogous wave
equation in the physical case discussed in \cite{GarVal08c} simply by
adding a tilde to the fields and setting $\Xi=1$. Arguably, one could try to use the theory of \textit{Fuchsian
  systems}, as used in \cite{ChrPaetz13,Pae14a}, to see if the
analogue of Theorem \ref{TheoremHomogeneousWave} applies for the
singular equation \eqref{WaveH_twistor_singular}.  However, one of the
advantages of the conformal approach of the CFEs is that one deals
with manifestly regular equations.  Therefore, from this perspective,
it is preferable to deal with manifestly regular equations by introducing $B_{ABC}$ as a further
zero-quantity to be propagated.  The same observation holds for the
conformal valence-2 Killing spinor initial data discussion of the
following sections, where, to close the system in a regular way, one
needs to introduce not only a ``Buchdahl" zero-quantity but also a
further derivative thereof.}
\end{remark}

\subsection{Intrinsic conformal twistor initial data conditions}
\label{Sec:IntrinsicTwistor}

Proposition \ref{Prop:Propagation_twistor} of the previous section
gives necessary and sufficient conditions for a twistor candidate to
correspond to a true twistor. We would like now to reduce the
conditions
\eqref{eq:VanishingOfH_twistor}--\eqref{eq:VanishingOfNormalDerivB_twistor},
which contain not only derivatives tangential to $\mathcal{S}$ but
also normal to it, to conditions on the field
$\kappa_A|_{\mathcal{U}}$ that are computable at the level of initial
data for the CFEs. More precisely, our twistor candidate will be constructed as the
solution to the following initial value problem:
\begin{equation}\label{TwistorIVP}
    \left\{
\begin{array}{ll}
	 \square \kappa_A + 2\Lambda \kappa_A=0 & \qquad
         \text{on}~\mathcal{D}^{+}(\mathcal{U}),\\ \kappa_A =
         \bar{\kappa}_A
         &\qquad\text{on}~\mathcal{U},\\ \nabla_{\bm\tau} \kappa_{A} +
         \tfrac{2}{3}\mathcal{D}_{A}{}^{B}\kappa_{B}=0
         &\qquad\text{on}~\mathcal{U},
\end{array} \right.
\end{equation}
where the initial data $\bar{\kappa}_A$ will be appropriately
restricted in order to ensure that the conditions
\eqref{eq:VanishingOfH_twistor} and
\eqref{eq:VanishingOfNormalDerivB_twistor} hold for the solution
$\kappa_A$. Note that the bulk equation here is precisely $Q_A=0$, the
twistor candidate equation.  \\

We begin by introducing the following definitions:
\begin{align}
  \mathcal{H} _{ABC} := \tau _{(A}{}^{A'}H_{|A'|BC)}, \qquad
  \mathcal{H}_{A} := \tau^{QA'} H_{A'AQ},
\end{align}
in terms of which the space spinor split of $H_{A'AB}$ reads
\begin{align}
  H_{A'AB} = - \tfrac{1}{2} \tau ^{C}{}_{A'} \mathcal{H} _{ABC} +
  \tfrac{1}{6} \tau_{AA'} \mathcal{H}_{B} + \tfrac{1}{6} \tau_{BA'}
  \mathcal{H}_{A}.
\end{align}
Note the space spinors $\mathcal{H} _{ABC}$ and $\mathcal{H}_{A}$
contain all the information of $H_{A'AB}$; in particular,
\[
H_{A'AB}=0 \quad                 
\iff \quad \mathcal{H} _{A}=0    
\quad
\& \quad \mathcal{H}_{ABC}=0.  
\]
Substituting the definition of $\bmH$, equation \eqref{Def_H_twistor},
one obtains
\begin{align}\label{spacespinordecompHtotwistorders}
\mathcal{H} _{A} \equiv \tfrac{3}{2} \nabla_{\bm\tau} \kappa_{A} +
\mathcal{D} _{A}{}^B\kappa_{B}, \qquad \mathcal{H} _{ABC} \equiv 2
\mathcal{D} _{(AB}\kappa _{C)},
\end{align}
Notice that $\mathcal{H}_A|_{\mathcal{U}}=0$ is precisely the second
initial condition of \eqref{TwistorIVP}. The condition
$\mathcal{H}_{ABC}=0$, on the other hand, contains only quantities
intrinsic to $\mathcal{S}$, which is also true of the condition
$B_{ABC}=0$ from equation \eqref{eq:VanishingOfH_twistor}.
%%%
%%%
This motivates the following definition:

\begin{definition}{\em
A spinor field $\bar{\kappa}_A$ defined on some
$\mathcal{U}\subset\mathcal{S}$ and satisfying
\begin{equation}
    \mathcal{H}(\bar{\bm\kappa})_{ABC}\equiv
    2\mathcal{D}_{(AB}\bar{\kappa}_{C)}=0,\qquad
    B(\bar{\bm\kappa})_{ABC}\equiv
    \phi_{ABCD}\bar{\kappa}^D=0 \label{twistor_CSKIDs}
\end{equation}
will be called a \emph{conformal twistor initial data set} on
$\mathcal{U}$.}
\end{definition}
We will show that a conformal twistor initial dataset,
$\bar{\kappa}_A$, indeed comprises initial data for a spacetime
twistor, in that the resulting solution of the initial value problem \eqref{TwistorIVP}
necessarily satisfies the twistor equation on an open neighbourhood $\mathcal{W}$
of $\mathcal{U}$, with $\mathcal{W}\subseteq\mathcal{D}^+(\mathcal{U})$.
First we establish the following:
\begin{lemma}\label{lemma:twistor_removing_redundancy}
  Given initial data for the conformal Einstein field equations on
  $\mathcal{U}\subset\mathcal{S}$, where $\mathcal{S}$ is a spacelike
  hypersurface with normal $\bm\tau$, if $\bar{\kappa}_A$ is a
  conformal twistor initial data set on $\mathcal{U}$, then the
  solution $\kappa_A$ of the initial value problem \eqref{TwistorIVP}
  satisfies
\[H(\bm\kappa)_{A'AB}=B(\bm\kappa)_{ABC}=\nabla_{\bm\tau}H(\bm\kappa)_{A'AB}=\nabla_{\bm\tau}B(\bm\kappa)_{ABC}=0 \]
on $\mathcal{U}$.
\end{lemma}
\begin{proof}
 The assumption that $\bar{\kappa}_A$ satisfies
 $\mathcal{H}(\bar{\bm\kappa})_{ABC}=0$ implies that the solution
 of \eqref{TwistorIVP} also satisfies
\[\mathcal{H}_{ABC}|_{\mathcal{U}}=0, \qquad \mathcal{H}_A|_{\mathcal{U}}=0,\]
the latter following from the second initial condition of the initial value problem
\eqref{TwistorIVP}. Hence, as remarked above, we have
$H_{A'ABC}|_{\mathcal{U}}=0$. Clearly $B_{ABC}|_{\mathcal{U}}=0$ also,
since $\kappa_A|_{\mathcal{U}}=\bar{\kappa}_A$. Now, substituting the
space spinor split of $\nabla$ in the identity
\eqref{derH_twistor_toBandQ}, it follows that
\begin{align*}
  \tau _{D}{}^{A'}\nabla_{\bm\tau} H_{A'AB} -2 \tau ^{CA'} \mathcal{D}
  _{DC}H_{A'AB} = 4\Xi B_{ABD} + \tfrac{4}{3} Q_{(A}\epsilon _{B)D}.
\end{align*}
Transvecting with $\tau^{D}{}_{B'}$ and rearranging gives
\begin{align}\label{time-derHToBHQ}
\nabla_{\bm\tau} H_{B'AB} = -4 \Xi B_{ABD} \tau ^{D}{}_{B'} -2 \tau
^{CA'} \tau ^{D}{}_{B'} \mathcal{D} _{DC}H_{A'AB} - \tfrac{4}{3} \tau
^{D}{}_{B'}Q_{(A}\epsilon _{B)D}.
\end{align}
Now, since $\kappa_A$ satisfies \eqref{TwistorIVP}, we have in
particular that $Q_A=0$ and hence the conditions
$H_{A'ABC}|_{\mathcal{U}} = B_{ABC}|_{\mathcal{U}}=0$ also imply that
\[\nabla_{\bm\tau} H_{A'AB}|_{\mathcal{U}}=0.\] 
All that remains, then, is to establish that
$\nabla_{\tau}B_{ABC}|_{\mathcal{U}}=0$. Using the definition of
$\bmB$, \eqref{Def_B_twistor}, we find
\begin{align*}
\nabla_{\bm\tau} B_{ABC} = \phi _{ABCD}\nabla_{\bm\tau} \kappa ^{D} +
\kappa ^{D} \nabla_{\bm\tau} \phi _{ABCD} .
\end{align*}
At this point one can exploit the evolution equation for
$\phi_{ABCD}$, namely \eqref{RescaledWeyl_evo_const}, along with the
initial conditions of \eqref{TwistorIVP}, to obtain
\begin{align}\label{normalderB_twistor_exp}
\nabla_{\tau}B_{ABC}|_{\mathcal{S}}= -2\kappa ^{D} \mathcal{D}
_{DF}\phi _{ABC}{}^{F} + \tfrac{2}{3} \phi _{ABCD} \mathcal{D}
^{D}{}_{F}\kappa ^{F} + \tfrac{2}{3}\phi_{ABCD}\mathcal{H}^D .
\end{align}
In fact, the right-hand-side can be completely rewritten in terms of
$\mathcal{H}_A|_{\mathcal{S}}$, $\mathcal{H}_{ABC}|_{\mathcal{S}}$ and
$B_{ABC}|_{\mathcal{U}}$. To do so, first swap indices $D$ and $A$ in
the first term on the right-hand-side of equation
\eqref{normalderB_twistor_exp}, which is made possible by the
constraint equation for the rescaled Weyl spinor, equation
\eqref{RescaledWeyl_evo_const}, to obtain
\begin{align}\label{normalderB_twistor_exp2}
\nabla_{\tau}B_{ABC}|_{\mathcal{U}}= -2 \kappa ^{D} \mathcal{D}
_{AF}\phi _{DBC}{}^{F} + \tfrac{2}{3} \phi _{ABCD} \mathcal{D}
^{D}{}_{F}\kappa ^{F} + \tfrac{2}{3}\phi_{ABCD}\mathcal{H}^D.
\end{align}
Substituting the definition of $B_{ABC}$ into the expression
$\mathcal{D} _{AD}B_{BC}{}^{D}$ and using the Leibnitz rule, one sees
that the above as equivalent to
\begin{equation}\label{normalderB_twistor_exp3}
\nabla_{\tau}B_{ABC}|_{\mathcal{U}}= -2 \mathcal{D} _{AD}B_{BC}{}^{D}
-2 \phi _{BCDF} \mathcal{D} _{A}{}^{F}\kappa ^{D} +\tfrac{2}{3} \phi
_{ABCF} \mathcal{D} _{D}{}^{F}\kappa ^{D} +
\tfrac{2}{3}\phi_{ABCD}\mathcal{H}^D.
\end{equation}
Now, performing the irreducible decomposition of $\mathcal{D}
_{AB}\kappa _{C}$ and using the expression for $\mathcal{H}_{ABC}$ in
\eqref{spacespinordecompHtotwistorders}, one has
\begin{equation}\label{decompSenKappa}
\mathcal{D} _{AB}\kappa _{C} = \tfrac{1}{2} \mathcal{H} _{ABC} +
\tfrac{1}{3} \epsilon _{BC} \mathcal{D} _{AD}\kappa ^{D} +
\tfrac{1}{3} \epsilon _{AC} \mathcal{D} _{BD}\kappa ^{D}.
\end{equation}
Finally, substituting decomposition \eqref{decompSenKappa} into
equation \eqref{normalderB_twistor_exp3} gives
\begin{align}\label{time-derBToBH}
\nabla_{\tau}B_{ABC}|_{\mathcal{U}}=- \phi _{BCDF} \mathcal{H}
_{A}{}^{DF} -2 \mathcal{D} _{AD}B_{BC}{}^{D} +
\tfrac{2}{3}\phi_{ABCD}\mathcal{H}^D.
\end{align}
Hence it follows that if $\mathcal{H}_{ABC}=\mathcal{H}_A=B_{ABC}=0$
on $\mathcal{U}$, then
\[\nabla_{\bm\tau}B_{ABC}|_{\mathcal{U}}=0, \]
and the conclusion follows.
\end{proof} 
\begin{remark}\emph{
Observe that although the CFEs were not needed in deriving equation \eqref{time-derHToBHQ}, they were however used when deriving \eqref{time-derBToBH}, so we have made crucial use of the assumption of having initial data satisfying the CFE
constraints in Lemma \ref{lemma:twistor_removing_redundancy}.  }
\end{remark}

We then obtain the main theorem of this section as a simple
application of Lemma \ref{lemma:twistor_removing_redundancy}:

\begin{theorem}\label{Theorem_twistor}
Consider an initial data set for the vacuum conformal Einstein field
equations, as encoded in the CFE zero-quantities
\eqref{Def_ConfFactor_CFE_zeroquant}--\eqref{Def_cons_CFE_zeroquant},
on a spacelike hypersurface $\mathcal{S}$ and let
$\mathcal{U}\subset\mathcal{S}$ be an open set.  The development of
the initial data set will have a twistor (valence-1 Killing spinor) in
an open neighbourhood $\mathcal{W}$ of $\mathcal{U}$, with
$\mathcal{W}\subseteq \mathcal{D}^{+}(\mathcal{U})$, if and only if
there exists a conformal twistor initial data set $\bar{\kappa}_A$ on
$\mathcal{U}$.  Given the existence of such a $\bar{\kappa}_A$, the
twistor $\kappa_A$ is obtained as the solution of the initial value
problem \eqref{TwistorIVP}.
\end{theorem}
\begin{proof}
  Lemma \ref{lemma:twistor_removing_redundancy} implies that if
  $\bar{\kappa}_A$ is a conformal twistor initial data set, then
  $\kappa_A$ satisfies
  \eqref{eq:VanishingOfH_twistor}--\eqref{eq:VanishingOfNormalDerivB_twistor},
  then, by virtue of Proposition \ref{Prop:Propagation_twistor}, one
  has that $H_{A'ABC}=0$ in an open neighbourhood $\mathcal{W}$ of
  $\mathcal{U}$, with $\mathcal{W}\subseteq
  \mathcal{D}^{+}(\mathcal{U})$. Hence, $\kappa_A$ is a
  twistor on $\mathcal{W}$.
\end{proof}

\section{Conformal Killing spinor initial data}
  \label{conformalKSKID}

In this section we perform the valence-2 counterpart of the analysis in previous section: we derive necessary and sufficient conditions for a spinor field defined on (a subset of) an initial hypersurface $\mathcal{S}$ to give rise to a Killing spinor on the spacetime development. These conditions will be given by the so-called \emph{conformal Killing spinor initial data
equations}, or \emph{CKSIDs} for short. Although the calculations will be more
involved, in general terms we follow the same strategy as in the twistor case.

\subsection{Killing spinor zero-quantities}

Analogous to the twistor case, we define the zero-quantities:
\begin{equation}\label{KS_zero_quantities1}
H_{A'ABC}:=3\nabla_{A'(A}\kappa_{BC)}, \qquad
B_{ABCD}:=\kappa_{(A}{}^Q\phi_{BCD)Q}.
\end{equation}
A short computation shows that
\begin{equation}
\nabla_{(A}{}^{A'}H_{\vert A'\vert BCD)} = 6\Xi
B_{ABCD}. \label{Eq:BuchdahlAsCurlOfH}
\end{equation}
Despite the formal resemblence with the equations of section
\ref{Sec:TwistorZeroQuantities}, the following discussion will show
that, unlike the twistor case, one cannot obtain a closed homogeneous
wave system in terms of these variables alone; we shall need the
following additional zero-quantity
\begin{equation}
 F_{A'BCD}:=\nabla^Q{}_{A'}B_{QBCD}.\label{Eq:DefZeroQuantityF}
\end{equation}
We shall show that it is however possible to derive a closed
homogeneous wave system for the fields $(\bmH, \bmB, \bmF)$.

\subsection{Auxiliary quantities and the Killing spinor candidate equation}

By analogy with the twistor case, it will prove useful to define the
following auxiliary quantity
\begin{equation}
  Q_{BC} := \tfrac{1}{2}\nabla^{AA'}H_{A'ABC}.
 \label{Eq:WaveForKS_DefinitionQ}
\end{equation}
A direct calculation shows that the auxiliary quantity $Q_{AB}=0$
encodes a wave equation for $\kappa_{AB}$:
\begin{equation}
  Q_{BC} \equiv \square \kappa_{BC} + 4 \Lambda\kappa_{BC} - \Xi
  \phi_{BCAD}\kappa^{AD}.
  \label{Eq:WaveForKS}
\end{equation}
A solution of $Q_{AB}=0$ will be called a \emph{Killing spinor candidate}. It will also prove convenient to define the
auxiliary spinor $\xi_{AA'} := \nabla^{B}{}_{A'}\kappa_{AB}$ in terms
of which one can perform the following decomposition
\begin{equation}
\nabla_{AA'}\kappa_{BC} = \tfrac{1}{3} H_{A'ABC} - \tfrac{1}{3}
\xi_{CA'} \epsilon_{AB} - \tfrac{1}{3} \xi_{BA'}
\epsilon_{AC}.\label{Eq:DecompGradKS}
\end{equation}
Using the latter expression, one can show by a straightforward
computation that
\begin{equation}\label{eq:non-KillingVectorExplained}
  \nabla^Q{}_{A'}H_{B'ABQ} + Q_{AB} \epsilon_{A'B'} =
  \nabla_{AA'}\xi_{BB'} + \nabla_{BB'}\xi_{AA'} + 6 \kappa_{(A}{}^{Q}
  \Phi_{B)QA'B'},
\end{equation}
which in turn implies the following identity
\begin{multline}
\nabla_{AA'}\xi_{BB'} = - \tfrac{1}{2} \epsilon_{AB}
\nabla^{C}{}_{(A'}\xi_{|C|B')}- 3 \kappa_{(A}{}^{C} \Phi_{B)CA'B'} - 3
\Lambda \kappa_{AB} \epsilon_{A'B'} \\+ \tfrac{3}{4} \Xi \kappa^{CD}
\phi_{ABCD} \epsilon_{A'B'} + \tfrac{1}{4} Q_{AB} \epsilon_{A'B'} +
\tfrac{1}{2} \nabla^Q{}_{(A'}H_{B')ABC},\label{Eq:DecompGradXi}
\end{multline}
which will prove useful later. On the other hand, it is
straightforward to show using equations \eqref{Eq:BuchdahlAsCurlOfH}
and \eqref{Eq:WaveForKS_DefinitionQ} that
\begin{equation}
     \nabla_{D}{}^{A'}H_{A'ABC} = 6 \Xi B_{ABCD} + \tfrac{3}{2}
     \ Q_{(AB}\epsilon_{C)D}.\label{IrrDecompCurlOfH}
\end{equation}

\begin{remark}
\label{GeometricSignificance}
  \emph{ As stressed in section \ref{sec:Introduction}, in general
  the auxiliary spinor $\xi_{AA'}$ is not the spinorial counterpart of
  a Killing vector. Contrast this with the case of the physical
  framework ---namely $(\tilde{\mathcal{M}},\tilde{\bmg})$ satisfying
  the vacuum Einstein field equations--- where the last term in
  equation \eqref{eq:non-KillingVectorExplained} vanishes and hence
  $\tilde{\bm\xi}$ is a Killing vector. This point is subtle even
  in the physical framework if one departs from the vacuum case: for instance, if one considers matter models such as
  Einstein-Maxwell then it is necessary to make further assumptions
  such as the \emph{matter alignment condition} to ensure that $\tilde{\bm\xi}$ is a
  Killing vector ---see \cite{ValCol16} for details. This property of
  $\tilde{\bm\xi}$ is crucial for the derivation of the physical
  Killing spinor data equations presented in \cite{GarVal08c} and
  \cite{ValCol16}, which involves the Killing vector zero-quantity $\tilde{S}_{ab}:=
  \tilde{\nabla}_{(a}\tilde{\xi}_{b)}$. 
%%%%%%% 
  In the unphysical
  framework one cannot appeal to this strategy since the unphysical
  Ricci spinor $\bm\Phi$ is non-vanishing and does not satisfy any
  useful algebraic relation.
  As we will see in
  the following, the key to solving this problem in the unphysical
  framework is not to introduce the analogous quantity $\bmS$ 
  %%%%
  but, instead, to
  focus on the Buchdahl constraint $\bmB$ and its derivative $\bmF$. Note however that the quantity 
  \begin{equation}\label{eq:conformalKillingVector}
X_{AA'}=\Xi \xi_{AA'} - 3 \kappa_{AQ}\nabla_{A'}{}^{Q}\Xi,
\end{equation}
which we do not make explicit use of, does have geometric significance in that it is a conformal Killing vector on $(\mathcal{M},\bmg)$. Indeed, a short computation verifies that
\begin{multline}\label{conformalKillingvector}
  \nabla_{AA'}X_{BB'}+\nabla_{BB'}X_{AA'}-\tfrac{1}{2}\epsilon_{AB}\epsilon_{A'B'}\nabla^{CC'}X_{CC'}
  =-\Xi \nabla^Q{}_{(A'}H_{B')ABQ} - 2\nabla^C{}_{(A'}H_{B')ABC}
\end{multline}
and moreover that $X_{AA'}=\Xi^2 \tilde{\xi}_{AA'}$, where $\tilde{\xi}_{AA'}$ is the Killing vector associated to $\tilde{\kappa}_{AB}=\Xi^{-2}\kappa_{AB}$ in the physical spacetime (with $\tilde{\bmg}=\Xi^{-2}\bmg$). 
  }
\end{remark}

%%%%

\subsection{Wave equations for the zero-quantities}
\label{Sec:KSWaveEqs}

Applying $\nabla^A{}_{B'}$ to equation \eqref{IrrDecompCurlOfH}, one
obtains the following wave equation:
\begin{multline}
    \square H_{B'ABC} = 6 \Lambda H_{B'ABC} - 12 \Xi F_{B'ABC} -
    12(\nabla^{D}{}_{B'}\Xi) B_{ABCD} \\+ 3\nabla_{(A|B'|}Q_{BC)} - 6
    \Phi_{(A}{}^{D}{}_{|B'}{}^{A'}H_{A'|BC)D}. \label{Eq:WaveEqForH}
\end{multline}
Similarly, applying $\nabla_A{}^{A'}$ to equation
\eqref{Eq:DefZeroQuantityF}, it is straightforward to verify the
following wave equation for $B_{ABCD}$:
\begin{equation}
     \square B_{ABCD} = 12\Lambda B_{ABCD} - 6\Xi
     \phi_{(AB}{}^{FG}B_{CD)FG} +
     2\nabla_{AA'}F^{A'}{}_{BCD}. \label{Eq:FirstWaveEqForB}
\end{equation}
The task remaining is to derive a wave equation for $F_{A'ABC}$.  To
do so, we will need some ancillary identities. Firstly, a direct
calculation shows that
\begin{multline}
2\phi_{(AB}{}^{GH}B_{CF)GH}=
\kappa_A{}^D\phi_{(BC}{}^{GH}\phi_{FD)GH}+\kappa_B{}^D\phi_{(AC}{}^{GH}\phi_{FD)GH}
\\+\kappa_C{}^D\phi_{(AB}{}^{GH}\phi_{FD)GH}+\kappa_F{}^D\phi_{(BC}{}^{GH}\phi_{AD)GH}.
\label{Eq:UsefulIdentity1}
 \end{multline}
This identity, along with the irreducible decomposition
\begin{align*}
\phi_{ABCD} \phi_{FGH}{}^{D} &= \tfrac{1}{24} \phi_{DLMP} \phi^{DLMP}
\epsilon_{AH} \epsilon_{BG} \epsilon_{CF} + \tfrac{1}{24} \phi_{DLMP}
\phi^{DLMP} \epsilon_{AG} \epsilon_{BH} \epsilon_{CF} \\ & +
\tfrac{1}{24} \phi_{DLMP} \phi^{DLMP} \epsilon_{AH} \epsilon_{BF}
\epsilon_{CG} + \tfrac{1}{24} \phi_{DLMP} \phi^{DLMP} \epsilon_{AF}
\epsilon_{BH} \epsilon_{CG} \\ &+ \tfrac{1}{24} \phi_{DLMP}
\phi^{DLMP} \epsilon_{AG} \epsilon_{BF} \epsilon_{CH} + \tfrac{1}{24}
\phi_{DLMP} \phi^{DLMP} \epsilon_{AF} \epsilon_{BG} \epsilon_{CH}
\\ &+ \tfrac{1}{6} \epsilon_{CH} \phi_{(AB}{}^{DL}\phi_{FG)DL} +
\tfrac{1}{6} \epsilon_{CG} \phi_{(AB}{}^{DL}\phi_{FH)DL} +
\tfrac{1}{6} \epsilon_{CF} \phi_{(AB}{}^{DL}\phi_{GH)DL}\\ & +
\tfrac{1}{6} \epsilon_{BH} \phi_{(AC}{}^{DL}\phi_{FG)DL} +
\tfrac{1}{6} \epsilon_{BG} \phi_{(AC}{}^{DL}\phi_{FH)DL} +
\tfrac{1}{6} \epsilon_{BF} \phi_{(AC}{}^{DL}\phi_{GH)DL}\\ &+
\tfrac{1}{6} \epsilon_{AH} \phi_{(BC}{}^{DL}\phi_{FG)DL} +
\tfrac{1}{6} \epsilon_{AG} \phi_{(BC}{}^{DL}\phi_{FH)DL} +
\tfrac{1}{6} \epsilon_{AF} \phi_{(BC}{}^{DL}\phi_{GH)DL},
\end{align*}
allows one to derive the following identity:
\begin{align}
  \kappa^{DG}\phi_{(ABC}{}^H\phi_{F)HDG}&=2\phi_{(AB}{}^{GH}B_{CF)GH}.
  \label{Eq:UsefulIdentity2}
\end{align}
Now, using the definition of the Buchdahl zero-quantity
\eqref{KS_zero_quantities1}, the CFEs for $\bm\phi$ in its first and
second order form, namely equations
\eqref{RescaledWeylEquationDisplayed} and \eqref{Wave_eq_CFE_Weyl},
and using the decomposition \eqref{Eq:DecompGradKS}, we get
\begin{multline} 
  \square B_{ABCD} = \tfrac{2}{3} \nabla_{\bm\xi}\phi_{ABCD} + 8
  \Lambda B_{ABCD} - \phi_{(ABC}{}^{F}Q_{D)F} + \tfrac{2}{3}
  H^{A'}{}_{(A}{}^{FG}\nabla_{|FA'|}\phi_{BCD)G} \\ - 6 \Xi
  \kappa_{(A}{}^{F}\phi_{BC}{}^{GH}\phi_{D)FGH} - \Xi
  \kappa^{FG}\phi_{(ABC}{}^{H}\phi_{D)FGH}.
\end{multline}
where $\nabla_{\bm\xi} := \xi^{AA'}\nabla_{AA'}$. Substituting the
above identities
\eqref{Eq:UsefulIdentity1}--\eqref{Eq:UsefulIdentity2}, we can derive
the following alternative (non-homogeneous) wave equation for
$B_{ABCD}$:
\begin{multline}
    \square B_{ABCD} = \tfrac{2}{3}\nabla_{\bm\xi}\phi_{ABCD} +
    8\Lambda B_{ABCD} - 14\Xi \phi_{(AB}{}^{FG}B_{CD)FG} \\ +
    \tfrac{2}{3}(\nabla_{FA'}\phi_{G(ABC})H^{A'}{}_{D)}{}^{FG} -
    \phi_{(ABC}{}^{F}Q_{D)F}. \label{Eq:SecondWaveEqForB}
\end{multline}
\begin{remark}
  \emph{ Equation \eqref{Eq:SecondWaveEqForB} actually encodes the
  fact that, \emph{given a Killing spinor} $\kappa_{AB}$, the field
  $\xi_{AA'}$ is a collineation for the rescaled Weyl tensor
  ---i.e. that
\[  \mathcal{L}_{\bm\xi}\phi_{ABCD} := \nabla_{\bm\xi}\phi_{ABCD} + \phi_{F(ABC}\nabla_{D)A'}\xi^{FA'}=0,\] 
the definition\footnote{The Lie derivative does not extend to spinor
fields, in general. See \cite{PenRin86} for a discussion.} of the
``Lie derivative" being derived from the spinorialised counterpart of
$\mathcal{L}_{\bm\xi}d_{abcd}$. Indeed, this fact follows from a
straightforward calculation
\begin{align*}
    \mathcal{L}_{\bm\xi}\phi_{ABCD}&:= \nabla_{\bm\xi}\phi_{ABCD}
    + \phi_{F(ABC}\nabla_{D)A'}\xi^{FA'} \\ &=
    \nabla_{\bm\xi}\phi_{ABCD}-6\Lambda \kappa_{(D}{}^{F} \phi_{ABC)F}
    - \tfrac{3}{2} \Xi \ \kappa^{FG} \phi_{(ABC}{}^{H} \phi_{D)FGH} +
    \tfrac{1}{4} \phi_{F(ABC} \ Q_{D)}{}^{F}\\ &=
    \nabla_{\bm\xi}\phi_{ABCD}-6\Lambda B_{ABCD} - 3\Xi
    \phi_{(AB}{}^{FG}B_{CD)FG} + \tfrac{1}{4}\phi_{F(ABC}Q_{D)}{}^F,
\end{align*}
where we are using \eqref{Eq:DecompGradXi} along with the identity
\eqref{Eq:UsefulIdentity2}. Given a Killing spinor $\kappa_{AB}$, the
resulting zero-quantities vanish: $B_{ABCD}=H_{A'ABC}=Q_{AB}=0$, and
hence it follows from \eqref{Eq:SecondWaveEqForB} that
\begin{equation}\mathcal{L}_{\bm\xi}\phi_{ABCD} = \nabla_{\bm\xi}\phi_{ABCD} =0. \label{Collineation}
\end{equation}
Thus, a sub-product of our analysis is that if $(\mathcal{M},\bmg)$
admits a Killing spinor, then the Weyl-collineation condition
\eqref{Collineation} is satisfied.  This is not trivial since, as remarked above, the
vector $\xi_{AA'}$ is not in general a
conformal Killing vector. In contrast,
for the physical spacetime it is clear that this condition holds 
since in that case $\tilde{\xi}_{AA'}$ is a Killing vector and thus
$\mathcal{L}_{\bm{\tilde{\xi}}}C_{abcd}=0$, trivially.
}
\end{remark}

The key observation to close the system is that there are no derivatives of zero-quantities
appearing on the right-hand-side of equation \eqref{Eq:SecondWaveEqForB}. This,
combined with the CFE $\Lambda_{A'ABC}=0$, equation \eqref{RescaledWeylEquationDisplayed},
suggests
that by applying $\nabla^A{}_{A'}$ to equation
\eqref{Eq:SecondWaveEqForB} we may be able to derive a wave equation
for $F_{A'BCD}$ with the desired properties, namely being homogeneous
in $(\bmH, \bmB, \bmF)$ and their first derivatives (apart from the
D'Alembertian term). We will see that this strategy does indeed work;
the difficult part is in deriving a suitable expression for
$\nabla^A{}_{A'}\nabla_{\bm\xi}\phi_{ABCD}$ in terms of the zero-quantities. To do so, first, we first note some further useful identities: from the definition
of the zero-quantities $\bmH$, $\bmB$ and the auxiliary spinor
$\bm\xi$, we obtain
\begin{align}
    \kappa_{A}{}^{F} \nabla_{FF'}\phi_{BCDG} &=\kappa_{A}{}^{F}
    \nabla_{DF'}\phi_{BCGF} \nonumber\\ &= \tfrac{1}{3} \xi_{AF'}
    \phi_{BCDG} - \tfrac{1}{3} \xi^{F}{}_{F'} \phi_{BCGF}
    \epsilon_{AD} - \tfrac{1}{3}
    \xi^{F}{}_{F'}\phi_{(BC|DF}\epsilon_{A|G)} \nonumber \\ & -
    \tfrac{2}{3} \xi^{F}{}_{F'}\phi_{(B|D|C|F}\epsilon_{A|G)} +
    \epsilon_{AD} F_{F'BCG} + \nabla_{(B|F'}B_{A|CG)D} + \tfrac{1}{3}
    \phi_{(BC|D}{}^{F}H_{F'A|G)F} \nonumber \\& + 2
    \epsilon_{A(B}F_{|F'\vert CG)D} - \tfrac{1}{3}
    \phi_{(BC}{}^{FH}H_{|F'|G)FH}\epsilon_{AD} - \tfrac{1}{6}
    \phi_{(BC}{}^{FH}H_{|F'DFH}\epsilon_{A|G)} \nonumber\\ & -
    \tfrac{1}{3} \phi_{(B|D}{}^{FH}H_{F'|C|FH}\epsilon_{A|G)} -
    \tfrac{1}{6}
    \phi_{D(B}{}^{FH}H_{|F'|C|FH}\epsilon_{A|G)}\label{Eq:MiscIdentity2}
\end{align}
from which it follows that 
\begin{align}
    \kappa^{AD} \phi_{AD}{}^{GH} \nabla_{HA'}\phi_{BCFG} &= 4
    \xi^{A}{}_{A'}\phi_{(BC}{}^{DG}\phi_{F)ADG} + \tfrac{1}{2}
    \phi_{ADGH} \phi^{ADGH} H_{A'BCF} \nonumber\\ & - 4
    B_{(B}{}^{ADG}\nabla_{|AA'|}\phi_{CF)DG} - 8
    \phi_{(BC}{}^{AD}F_{|A'|F)ADG}\nonumber \\ & - 4
    \phi_{(B}{}^{ADG}\nabla_{|AA'|}B_{CF)DG} - \tfrac{1}{3}
    \phi_{(BC}{}^{AD}\phi_{|AD}{}^{GH}H_{A'|F)GH}\nonumber \\ & -
    \tfrac{2}{3}
    \phi_{(B}{}^{ADG}\phi_{C|AD}{}^{H}H_{A'|F)GH} \label{Eq:MiscIdentity3}
\end{align}
---see Appendix \ref{Appendix_A} for details.
\\

Now, commuting derivatives and using equation
\eqref{RescaledWeylEquationDisplayed} gives
\begin{align}
    \nabla^A{}_{A'}\nabla_{\bm\xi}\phi_{ABCD} &=
    \nabla_{\bm\xi}(\nabla^A{}_{A'}\phi_{ABCD}) +
    (\nabla^A{}_{A'}\xi^{FF'})\nabla_{FF'} \phi_{ABCD} +
    \xi^{FF'}\left[\nabla^A{}_{A'},
      \nabla_{FF'}\right]\phi_{ABCD}\nonumber\\ & =
    (\nabla^A{}_{A'}\xi^{FF'})\nabla_{FF'} \phi_{ABCD} +
    \xi^{FF'}\left[\nabla^A{}_{A'}, \nabla_{FF'}\right]\phi_{ABCD}.
\end{align}
Then, using equations \eqref{Eq:DecompGradXi},
\eqref{RescaledWeylEquationDisplayed}
and expanding the commutator gives
\begin{align*}
\nabla^D{}_{A'}\nabla_{\bm\xi}\phi_{ABCD} & = 6 \Lambda \xi^{D}{}_{A'} \phi_{ABCD} -  \xi^{DF'} \Phi_{D}{}^{F}{}_{A'F'} \phi_{ABCF} -  3\xi^{DF'} \Phi_{(A}{}^{F}{}_{\vert A'F'\vert} \phi_{BC)DF}  \nonumber\\
    &\quad -  3\Xi\xi^{D}{}_{A'}  \phi_{DFG(A} \phi_{BC)}{}^{FG} - 3\Lambda \kappa^{DF} \nabla_{FA'}\phi_{ABCD} + \tfrac{1}{4} Q^{DF} \nabla_{FA'}\phi_{ABCD}  \nonumber\\ 
    &\quad -  \tfrac{3}{2} \kappa^{DF} \Phi_{D}{}^{G}{}_{A'}{}^{F'} \nabla_{FF'}\phi_{ABCG} -  \tfrac{3}{2} \kappa^{DF} \Phi_{D}{}^{G}{}_{A'}{}^{F'} \nabla_{GF'}\phi_{ABCF} \nonumber\\
    &\quad + \tfrac{3}{4} \Xi \kappa^{DF} \phi_{DF}{}^{GH} \nabla_{HA'}\phi_{ABCG} - \tfrac{1}{2}( \nabla_{FF'}\phi_{ABCD}) \nabla_{Q(A'}H_{F')F}{}^{DQ}.
\end{align*}
Finally, using equations \eqref{Eq:MiscIdentity2}--\eqref{Eq:MiscIdentity3}, one obtains
\begin{align}
\nabla^D{}_{A'}\nabla_{\bm\xi}\phi_{ABCD}  &= -3 \Phi _{A}{}^{D}{}_{A'}{}^{F'} F_{F'BCD}  - \tfrac{3}{8} \Xi  \phi _{DFGH} \phi ^{DFGH} H_{A'ABC} + \Lambda  \phi _{BCDF} H_{A'A}{}^{DF} \nonumber\\ 
    & \quad- \tfrac{1}{8} Q^{DF} \nabla_{FA'}\phi _{ABCD}   - \tfrac{3}{2} \Phi ^{DF}{}_{A'}{}^{F'} \nabla_{FF'}B_{ABCD} + \tfrac{1}{4} (\nabla_{DA'}H^{F'DFG}) \nabla_{GF'}\phi _{ABCF} \nonumber\\
    &\quad + \tfrac{1}{4} (\nabla_{D}{}^{F'}H_{A'}{}^{DFG}) \nabla_{GF'}\phi _{ABCF}  + 12 \Lambda  F_{A'ABC} +6\Xi \nabla^G{}_{A'}\left( B_{(AB}{}^{DF}\phi_{CG)DF}\right) \nonumber \\
    &\quad - \tfrac{9}{2} \Phi _{(B}{}^{D}{}_{|A'}{}^{F'}F_{F'A|CD)}  - \tfrac{3}{2} \Phi ^{DF}{}_{A'}{}^{F'}\nabla_{(B|F'}B_{A|CD)F} + \tfrac{1}{4} \Xi  \phi _{(BC}{}^{GH}\phi _{DF)GH} H_{A'A}{}^{DF}  \nonumber\\
    &\quad+ 2 \Lambda  \phi_{DF(AB}H_{\vert A'\vert C)}{}^{DF}+ 3 \Xi  \phi _{(BC}{}^{DF}F_{|A'A|D)F} + 6 \Xi  \phi _{A(B}{}^{DF}F_{|A'|CD)F} \nonumber \\
    &\quad  + \tfrac{3}{8} \Phi _{(B}{}^{D}{}_{|A'|}{}^{F'}\phi _{CD)}{}^{FG}H_{F'AFG} + \tfrac{9}{8} \Phi _{(B}{}^{D}{}_{|A'}{}^{F'}\phi _{A|C}{}^{FG}H_{|F'|D)FG} \nonumber\\
    &\quad + \Phi _{A}{}^{D}{}_{A'}{}^{F'}\phi _{(BC}{}^{FG}H_{|F'|D)FG}  - \tfrac{1}{2} \Phi ^{DF}{}_{A'}{}^{F'}\phi _{A(BC}{}^{G}H_{|F'|D)FG} \nonumber \\
    &\quad  - \tfrac{1}{2} \Phi ^{DF}{}_{A'}{}^{F'}\phi _{A(B|D}{}^{G}H_{F'|CF)G}+ \tfrac{1}{6}\Xi H_{A'A}{}^{DF}\phi_{(BC}{}^{GH}\phi_{DF)GH} \nonumber\\
    &\quad + \tfrac{1}{6}\Xi H_{A'B}{}^{DF}\phi_{(CA}{}^{GH}\phi_{DF)GH} + \tfrac{1}{6}\Xi H_{A'C}{}^{DF}\phi_{(AB}{}^{GH}\phi_{DF)GH}. \label{Eq:CollineationIdentity}
\end{align}
Note that the final expression is homogeneous in the zero-quantities
$(\bmH, \bmB,\bmF)$ and their first derivatives, as required.\\

Collecting together the above, we derive the required wave equation for
$F_{A'BCD}$:
\begin{align}
    \square F_{A'BCD} &= \square (\nabla^A{}_{A'}B_{ABCD}) \nonumber\\
    &= \left[\square, \nabla^A{}_{A'}\right] B_{ABCD} + \nabla^A{}_{A'}\square B_{ABCD} \\
    &= \left[\square, \nabla^A{}_{A'}\right] B_{ABCD} + \tfrac{2}{3}\nabla^A{}_{A'}\nabla_{\bm\xi}\phi_{ABCD} \nonumber\\
    &\quad+ \nabla^A{}_{A'}\left(8\Lambda B_{ABCD} - 14\Xi \phi_{(AB}{}^{FG}B_{CD)FG} + \tfrac{2}{3}(\nabla_{FA'}\phi_{G(ABC})H^{A'}{}_{D)}{}^{FG}\right)\nonumber\\
    &=   \tfrac{2}{3}\nabla^A{}_{A'}\nabla_{\bm\xi}\phi_{ABCD} -6 \Lambda F_{A'BCD} - 6 \Phi_{(B}{}^{A}{}_{\vert A'}{}^{B'} F_{B'\vert CD)A} - 9 (\nabla^{A}{}_{A'}\Lambda)B_{BCDA} \nonumber \\
    &\quad + 3B_{(BC}{}^{FG} \phi_{D)AFG} \nabla^{A}{}_{A'}\Xi  + 3B_{AF(BC} \nabla^{FB'}\Phi_{D)}{}^{A}{}_{A'B'}  - 6 \Xi \phi_{(B}{}^{AFG} \nabla_{\vert GA'\vert}B_{CD)AF} \nonumber\\
    &\quad + \nabla^A{}_{A'}\left(8\Lambda B_{ABCD} - 14\Xi \phi_{(AB}{}^{FG}B_{CD)FG} + \tfrac{2}{3}(\nabla_{FA'}\phi_{G(ABC})H^{A'}{}_{D)}{}^{FG}\right) \label{Eq:WaveEqForF},
\end{align}
where we are using equation \eqref{Eq:SecondWaveEqForB} in the third line and
in the fourth we are expanding out the commutator and using the
Bianchi identites. Substituting equation
\eqref{Eq:CollineationIdentity} and setting $Q_{AB}=0$, we obtain a
homogeneous expression in $(\bmH, \bmB,\bmF)$ and their first
derivatives, as required.  \\

With this closed system of homogeneous
wave equations ---\eqref{Eq:WaveEqForH},
\eqref{Eq:FirstWaveEqForB} and \eqref{Eq:WaveEqForF}--- at hand, a direct application of Theorem
\ref{TheoremHomogeneousWave} gives the following:
\begin{proposition}\label{Prop:Propagation_KS}
  Given initial data for the conformal Einstein field equations on
  $\mathcal{U}\subset\mathcal{S}$ where $\mathcal{S}$ is a spacelike
  hypersurface $\mathcal{S}$ with normal vector $\tau^{AA'}$, then a
  (valence-2) Killing spinor candidate on $\mathcal{D}^{+}(\mathcal{U})$,
  is a true Killing spinor on an open neighbourhood $\mathcal{W}$
  of $\mathcal{U}$, with $\mathcal{W} \subseteq \mathcal{D}^{+}(\mathcal{U})$,
  if and only if
\begin{subequations}
\begin{eqnarray}
  &&
  H_{A'ABC}=B_{ABCD}=F_{A'ABC}=0,\label{Eq:KSInitialCondition1}\\ &&
  \nabla_{\bm\tau} H_{A'ABC}=\nabla_{\bm\tau} B_{ABCD}=
  \nabla_{\bm\tau} F_{A'ABC}=0, \label{Eq:KSInitialCondition2}
\end{eqnarray}
\end{subequations}
hold on $\mathcal{U}$.
\end{proposition}
\begin{proof}
The \emph{only if} direction is immediate. Suppose, on the other hand,
that $\kappa_{AB}$ is a Killing spinor candidate on
$\mathcal{D}^+(\mathcal{U})$ satisfying
\eqref{Eq:KSInitialCondition1}--\eqref{Eq:KSInitialCondition2} on
$\mathcal{U}$. In particular, $Q_{AB}=0$, and the identities
\eqref{Eq:WaveEqForH}, \eqref{Eq:FirstWaveEqForB} and
\eqref{Eq:WaveEqForF} reduce to a closed system of homogeneous wave
equations for the zero-quantities $\bmH,~\bmB$ and
$\bmF$. The uniqueness result for homogeneous wave equations,
given in Theorem \ref{TheoremHomogeneousWave},
ensures then that
\[ H_{A'ABC}=0,\qquad B_{ABCD}=0, \qquad F_{A'ABC}=0,\]
on an open neighbourhood $\mathcal{W}$ of $\mathcal{U}$, with $\mathcal{W}\subseteq
\mathcal{D}^{+}(\mathcal{U})$. In particular,
$\kappa_{AB}$ solves the Killing spinor equation on $\mathcal{W}$.
\end{proof}

To summarise: we have found necessary and sufficient conditions,
namely \eqref{Eq:KSInitialCondition1}--\eqref{Eq:KSInitialCondition2},
defined on $\mathcal{U}\subset \mathcal{S}$, for a given Killing
spinor candidate ---i.e. a field satisfying the wave equation
encoded by $Q_{AB}=0$, see \eqref{Eq:WaveForKS}--- to be a Killing spinor.

\subsection{Conformal Killing spinor initial data conditions (CKSID)}
\label{IntrinsicCKSID}

By analogy with the twistor case in section
\ref{Sec:IntrinsicTwistor}, in this section we aim to reduce
\eqref{Eq:KSInitialCondition1}--\eqref{Eq:KSInitialCondition2} to a
set of intrinsic conditions ---that is to say, conditions on
$\kappa_{AB}$ that are computable at the level of an initial data set
for the CFEs.  \\

This time, the initial value problem of interest is the following:
\begin{equation}\label{KillingSpinorIVP}
    \left\{
\begin{array}{ll}
	 \square \kappa_{AB} + 4 \Lambda\kappa_{AB} - \Xi
         \phi_{ABCD}\kappa^{CD} = 0 & \qquad
         \text{on}~\mathcal{D}^{+}(\mathcal{U}),\\ \kappa_{AB} =
         \bar{\kappa}_{AB}
         &\qquad\text{on}~\mathcal{U},\\ \nabla_{\bm\tau} \kappa_{AB}
         + \xi_{AB} = 0 &\qquad\text{on}~\mathcal{U}.
\end{array} \right.
\end{equation}
where $\xi_{AB}:=\mathcal{D}_{(A}{}^C\kappa_{B)C}$ is used as a
shorthand.  We define
\[\mathcal{H}_{ABCD}:=\tau_{(A}{}^{A'}H_{\vert A'\vert BCD)}, \qquad \mathcal{H}_{BC}:=\tau^{AA'}H_{A'ABC}, \]
in terms of which
\[H_{A'ABC} = -\tfrac{1}{2}\tau^D{}_{A'}\mathcal{H}_{ABCD} +
\tfrac{1}{8}\tau_{AA'}\mathcal{H}_{BC} + \tfrac{1}{8}\tau_{BA'}\mathcal{H}_{AC}
+ \tfrac{1}{8}\tau_{CA'}\mathcal{H}_{AB}. \]
Analogous to the twistor case, $\mathcal{H}_{AB}$ and
$\mathcal{H}_{ABCD}$ together contain all the information of
$H_{A'ABC}$; in particular,
\[
H_{A'ABC}=0 \quad \iff \quad \mathcal{H} _{AB}=0 \quad \& \quad
\mathcal{H}_{ABCD}=0.
\]
Substituting the definition of $H_{A'ABC}$ from
\eqref{KS_zero_quantities1},
\begin{equation}
\mathcal{H}_{BC}\equiv 3(\nabla_{\bm\tau} \kappa_{BC} + \xi_{BC}),
\qquad \mathcal{H}_{ABCD} \equiv
3\mathcal{D}_{(AB}\kappa_{CD)}. \label{Eq:SpatialKS}
\end{equation}
Note that $\mathcal{H}_{AB}|_{\mathcal{U}}=0$ is precisely the second
initial condition of \eqref{KillingSpinorIVP}. Again, the conditions
$\mathcal{H}_{ABCD}=B_{ABCD}=0$ involve only quantities intrinsic to
$\mathcal{S}$. Following the twistor case, we make the following
definition:

\begin{definition}{\em
A symmetric spinor field $\bar{\kappa}_{AB}$ defined on some
$\mathcal{U}\subset\mathcal{S}$ and satisfying
\begin{equation}
    \mathcal{H}(\bar{\bm\kappa})_{ABCD}\equiv
    3\mathcal{D}_{(AB}\bar{\kappa}_{CD)}=0,\qquad
    B(\bar{\bm\kappa})_{ABCD}\equiv
    \bar{\kappa}_{(A}{}^Q\phi_{BCD)Q}=0 \label{CSKIDs}
\end{equation}
will be called a \emph{conformal Killing spinor initial data set}
(\emph{CKSID}) on $\mathcal{U}$.}
\end{definition}

While conditions \eqref{CSKIDs} are clearly necessary for
$\kappa_{AB}$ to be a Killing spinor, we will only be able to prove
their sufficiency under an additional, albeit minor, assumption; see
the statement of Lemma \ref{prop_remove_redundant_conditions}, below.
We begin with the following:

\begin{lemma}\label{Lemma_initial_conditions}
  Suppose we have an initial data for the conformal Einstein field
  equations on $\mathcal{U}\subset\mathcal{S}$ where $\mathcal{S}$ is
  a spacelike hypersurface. Let $\bar{\kappa}_{AB}$ be a CKSID on
  $\mathcal{U}$ which moreover satisfies the equations
\begin{subequations}
\begin{eqnarray}
  && \bar{\kappa}_{(A}{}^{F}\mathcal{D}_{B}{}^{G}\phi_{CD)FG} +
  \phi_{(ABC}{}^{F}\bar{\xi}_{D)F} = 0,\label{RedundantCondition1}\\ &&
  \bar{\xi}^{FG}\mathcal{D}_{FG}\phi_{ABCD} +
  \tfrac{2}{3}\bar{\xi}\mathcal{D}_{(A}{}^F\phi_{BCD)F} =
  0, \label{RedundantCondition2}
\end{eqnarray}
\end{subequations}
where $\bar{\xi}_{AB}:=\mathcal{D}_{(A}{}^C\bar{\kappa}_{B)C}$ and
$\bar{\xi}:=\mathcal{D}^{AB}\bar{\kappa}_{AB}$. Then the resulting
solution $\kappa_{AB}$ of system \eqref{KillingSpinorIVP} satisfies
\[H(\bm\kappa)_{A'ABC}=B(\bm\kappa)_{ABCD}=F(\bm\kappa)_{A'BCD}=\nabla_{\bm\tau}H(\bm\kappa)_{A'ABC}
=\nabla_{\bm\tau}B(\bm\kappa)_{ABCD}=\nabla_{\bm\tau}F(\bm\kappa)_{A'ABC}=0 \]
on $\mathcal{U}$.
\end{lemma}
\begin{proof}
  First we note that for $\kappa_{AB}$ satisfying equation
  \eqref{KillingSpinorIVP}, the following holds
\begin{equation}
\xi_{AB'}|_{\mathcal{U}} = -\tfrac{3}{2}
\tau^B{}_{B'}\xi_{AB} +
\xi\tau_{AB'}, \label{DecompXiOnSUsingInitialCondition}
\end{equation}
with $\xi:=\mathcal{D}^{AB}\kappa_{AB}$ and where, recall,
$\xi_{AB}=\mathcal{D}_{(A}{}^C\kappa_{B)C}$. To see this, note that
\begin{align}
\xi_{AB'} = - \tau_{B}{}^{A'} \tau^{B}{}_{B'}\xi_{AA'}
&= - \tau_{B}{}^{A'} \tau^{B}{}_{B'}\nabla^C{}_{A'}\kappa_{AC}\nonumber\\
&=\tfrac{1}{2}\tau^{B}{}_{B'}\nabla_{\bm\tau}\kappa_{AB} +
\ \tau^{B}{}_{B'}
\mathcal{D}_{BC}\kappa_{A}{}^{C} \nonumber\\ &=\tfrac{1}{2}\tau^{B}{}_{B'}\nabla_{\bm\tau}\kappa_{AB}
- \tau^B{}_{B'}\xi_{AB} + \xi\tau_{AB'}\label{AuxXiExpanded} 
\end{align}
where we are decomposing the covariant derivative. Finally,
substituting the initial condition from \eqref{KillingSpinorIVP}, we
obtain equation \eqref{DecompXiOnSUsingInitialCondition}. Note also
that $\bar{\xi}=\xi$, $\bar{\xi}_{AB} = \xi_{AB}$ on $\mathcal{U}$ as
a result of the initial condition $\kappa_{AB}=\bar{\kappa}_{AB}$.
Starting from equation \eqref{IrrDecompCurlOfH}, performing the
decomposition of the covariant derivative and substituting the Killing
spinor candidate equation $Q_{AB}=0$, we get
\[ \nabla_{\bm\tau} H_{B'ABC} + 2 \tau^{D}{}_{B'} \tau^{FA'}
\mathcal{D}_{DF}H_{A'ABC} = -12 \Xi \tau^{D}{}_{B'}B_{ABCD}.\]
At this point we see that, since the CKSID conditions hold by assumption, then
\[\nabla_{\bm\tau} H_{B'ABC}\big\vert_{\mathcal{U}}=0\] also. Now, a
similar computation to the twistor case yields
\begin{equation}
    \nabla_{\bm\tau} B_{ABCD}|_{\mathcal{U}} = 2
    \kappa_{(A}{}^{F}\mathcal{D}_{B}{}^{G}\phi_{CD)FG} +
    \phi_{(ABC}{}^{F}\xi_{D)F},\label{EvolutionForBuchdahl}
\end{equation}
where we are again making use of equations
\eqref{RescaledWeyl_evo_const} and
\eqref{DecompXiOnSUsingInitialCondition}.  Note that the quantity on
the right-hand-side is intrinsic to $\mathcal{S}$. Hence, if we assume
\eqref{RedundantCondition1}, then we have
\begin{equation}\label{Eq:NormalDerivativeOfBuchdahlVanishes}
    \nabla_{\bm\tau}B_{ABCD} |_{\mathcal{U}}= 0.
\end{equation}
Consider now the quantity $F_{A'ABC}$. Recall that, by definition,
$F_{A'ABC}=\nabla^D{}_{A'}B_{ABCD}$, and so decomposing the covariant
derivative one obtains
\[F_{A'BCD} = \tfrac{1}{2} \tau^{A}{}_{A'} \nabla_{\bm\tau} B_{ABCD}
- \tau^{F}{}_{A'} \mathcal{D}^{A}{}_{F}B_{ABCD}.\] Hence, substituting
the CKSID condition $B_{ABCD}=0$ and using
\eqref{Eq:NormalDerivativeOfBuchdahlVanishes} we see that
$F_{A'ABC}=0$ on $\mathcal{U}$, also.  Thus it only remains to show
that $\nabla_{\bm\tau}F_{A'ABC}=0$ on $\mathcal{U}$.

\medskip
Comparing the two wave equations for $B_{ABCD}$,
\eqref{Eq:FirstWaveEqForB} and \eqref{Eq:SecondWaveEqForB}, we derive
the identity
\begin{multline}
     \nabla_{FA'}F^{A'}{}_{BCD} = \tfrac{1}{3}
     \nabla_{\xi}{}\phi_{BCDF} - 2 \Lambda B_{BCDF} + 3\Xi
     B_{(BC}{}^{AG} \phi_{D)FAG} - 7 \Xi B_{(BC}{}^{AG}\phi_{DF)AG}
     \\- \tfrac{1}{2} \phi_{(BCD}{}^{A}Q_{F)A} + \tfrac{1}{3}
     H^{A'}{}_{(B}{}^{AG}\nabla_{|AA'|}\phi_{CDF)G}\label{CurlOfFInTermsOfCollineation}.
\end{multline}
Decomposing the covariant derivative and imposing the Killing spinor
candidate equation, $Q_{AB}=0$, we get
\begin{multline}
\nabla_{\bm\tau} F_{B'BCD} + 2 \tau^{A}{}_{B'} \tau^{FA'}
\mathcal{D}_{AF}F_{A'BCD} \\ = \tau^{A}{}_{B'} \big(\tfrac{2}{3}
\nabla_{\xi}\phi_{BCDA} - 4 \Lambda B_{BCDA} + 6 \Xi B_{(BC}{}^{FG}
\phi_{D)AFG} \\ - 14 \Xi B_{(BC}{}^{AG}\phi_{DA)FG} + \tfrac{2}{3}
H^{A'}{}_{(B}{}^{FG}\nabla_{|FA'|}\phi_{CDA)G} \big).
\end{multline}
Hence, if
$F_{A'ABC}|_{\mathcal{U}}=H_{A'ABC}|_{\mathcal{U}}=B_{ABCD}|_{\mathcal{U}}=0$,
then $\nabla_{\bm\tau}F_{A'ABC}|_{\mathcal{U}}=0$ if and only if
\[ \nabla_{\bm\xi}\phi_{ABCD}|_{\mathcal{U}}=0.\]
Decomposing the covariant derivative, and using the evolution equation
\eqref{RescaledWeyl_evo_const} for $\phi_{ABCD}$ again, along with
equation \eqref{DecompXiOnSUsingInitialCondition}, we have
\[ \nabla_{\bm\xi}\phi_{ABCD}|_{\mathcal{U}} =  \xi \mathcal{D}_{(A}{}^{F}\phi_{BCD)F} + \tfrac{3}{2} \xi^{FG} \mathcal{D}_{FG}\phi_{ABCD}. \]
Hence, if we assume \eqref{RedundantCondition2} to hold, then
\[\nabla_{\bm\tau} F_{B'BCD}|_{\mathcal{U}}=0, \]
and the result of Lemma \ref{Lemma_initial_conditions} follows.
\end{proof}
Condition \eqref{RedundantCondition1} is in fact the ``unphysical"
counterpart of the condition appearing in \cite{GarVal08c}, which was
later shown to be redundant in \cite{BaeVal10c}, modulo a minor
algebraic assumption on the Killing spinor initial data ---see (i) and
(ii) in the lemma below. Although no such counterpart of
\eqref{RedundantCondition2} appears in the physical case, this same
algebraic assumption ensures redundancy of both
\eqref{RedundantCondition1} and \eqref{RedundantCondition2}:
\begin{lemma}\label{prop_remove_redundant_conditions}
Suppose that $\bar{\kappa}_{AB}$ is CKSID set on $\mathcal{U}$
satisfying one of the following two conditions on $\mathcal{U}$:
\begin{enumerate}[(i)]
    \item $\bar{\kappa}_{AB}\bar{\kappa}^{AB} \equiv 0$ ~~~but
      ~~~$\bar{\kappa}_{AB}\neq 0$,~~~\textbf{\textit{or}}
    \item $\bar{\kappa}_{AB}\bar{\kappa}^{AB}\neq 0$.
\end{enumerate}
Then
conditions \eqref{RedundantCondition1} and \eqref{RedundantCondition2}
are redundant in the sense that they are automatically satisfied by
$\bar{\kappa}_{AB}$ by virtue of the CKSID conditions \eqref{CSKIDs}. 
\end{lemma}
The proof of this lemma requires decomposing the fields respect to a spin dyad
and considering the cases where $\bm\phi$ is of different Petrov types.
This is a long but direct calculation that is given in Appendix \ref{Sec:ProofOfProp3}.
\begin{remark}{\em 
Note that if it is not the case that $\bar{\kappa}_{AB}\bar{\kappa}^{AB}\equiv 0$ on
$\mathcal{U}$, then there must exist an open subset
$\emptyset\neq \tilde{\mathcal{U}}\subset\mathcal{U}$ on which
(ii) holds, since the vanishing of
$\bar{\kappa}_{AB}\bar{\kappa}^{AB}$ is a closed condition by the assumed
continuity (in fact, differentiability) of the Killing spinor
candidate. Hence, the conditions of Lemma \eqref{prop_remove_redundant_conditions} imply, at worst, a restriction the of domain of applicability.}
\end{remark}

\noindent
Finally, putting together Proposition \ref{Prop:Propagation_KS},
Lemmas \ref{Lemma_initial_conditions} and
\ref{prop_remove_redundant_conditions} gives the following valence-2
analogue of Theorem \ref{Theorem_twistor}:

\begin{theorem}\label{Theorem_KS}
Consider an initial data set for the vacuum conformal Einstein field
equations, as encoded in the CFE zero-quantities
\eqref{Def_ConfFactor_CFE_zeroquant}--\eqref{Def_cons_CFE_zeroquant},
on a spacelike hypersurface $\mathcal{S}$ and let
$\mathcal{U}\subset\mathcal{S}$ be an open set.  If there exists a
CKSID set $\bar{\kappa}_{AB}$ on $\mathcal{U}$ satisfying either of
conditions $(i)$ or $(ii)$ from Lemma
\ref{prop_remove_redundant_conditions}, then the development of the
initial data set admits a Killing spinor on an open neighbourhood 
$\mathcal{W}$ of $\mathcal{U}$, with $\mathcal{W}\subseteq\mathcal{D}^+(\mathcal{U})$,
given by the solution $\kappa_{AB}$ of
the initial value problem \eqref{KillingSpinorIVP}. If condition (i) holds, then in fact
there exists a twistor on $\mathcal{W}$.
\end{theorem}

\begin{proof}

Given such a $\bar{\kappa}_{AB}$, Lemmas
\ref{Lemma_initial_conditions} and
\ref{prop_remove_redundant_conditions} together imply that the Killing
spinor candidate $\kappa_{AB}$ constructed as a solution of
\eqref{KillingSpinorIVP} satisfies conditions
\eqref{Eq:KSInitialCondition1}--\eqref{Eq:KSInitialCondition2}. Proposition
\ref{Prop:Propagation_KS} then implies that $H_{A'ABC}=0$ on
an open neighbourhood $\mathcal{W}$ of $\mathcal{U}$, with
$\mathcal{W} \subseteq \mathcal{D}^+(\mathcal{U})$,
and hence that $\kappa_{AB}$ is indeed a
Killing spinor on $\mathcal{W}$. In particular, if condition (i) holds then
$\bar{\kappa}_{AB}=\bar{\kappa}_A\bar{\kappa}_B$ for some
$\bar{\kappa}_A\neq 0$ on $\mathcal{U}$. It is straightforward to
verify that $\bar{\kappa}_A$ solves the conformal twistor initial data
conditions, equations \eqref{twistor_CSKIDs}; see the discussion of
Appendix \ref{TypeNCase}. Theorem \ref{Theorem_twistor} then implies
the existence of a twistor on $\mathcal{W}$.
\end{proof}

\begin{remark}{\em 
Note that we recover the conditions from \cite{BaeVal10a}, namely
\[\tilde{\mathcal{D}}_{(AB}\tilde{\kappa}_{CD)}=0, \qquad \tilde{\kappa}_{(A}{}^F
\Psi_{BCD)F}=0, \] when $\Xi\equiv 1$ on $\mathcal{U}$.  }
\end{remark}
Given the close connection between the notion of algebraic special Petrov types
and the existence of Killing spinors, discussed in section
\ref{Background}, it is not surprising that one can use the previous
result to establish conditions under which the Petrov type of an
initial data set is ``propagated" to the resulting spacetime
development; see Theorem 3 of \cite{GarVal08c} for a similar result in
the physical framework case. This is the content of the following:

\begin{corollary}\label{Corollary:PetrovPropagation}
  Given initial data for the CFEs, suppose that the initial data for the rescaled Weyl spinor
  $\phi_{ABCD}$ is of Petrov type D on $\mathcal{U}\subset\mathcal{S}$
  and suppose further that
  \footnote{The fields $\sigma$ and $\lambda$ (not to be confused with
  the cosmological constant), along with $\kappa, \tau, \rho, \mu,
  \nu, \pi$ to follow, are Newmann--Penrose (NP) scalars
  \cite{PenRin84, PenRin86}; see the Appendix for further details.}
\begin{equation} 
\sigma\equiv - o^A o^B o^C\mathcal{D}_{AB}o_C=0 \quad \text{and}\quad
\lambda\equiv
\iota^A\iota^B\iota^C\mathcal{D}_{AB}\iota_C=0\label{ShearConditions}
\end{equation}
hold on $\mathcal{U}$, where $\lbrace \bmo, \bm\iota\rbrace$ is an
adapted spin dyad in terms of which $\phi_{ABCD}=\phi
o_{(A}o_B\iota_C\iota_{D)}$ for some field $\phi:\mathcal{U}\rightarrow\mathbb{C}$.  Then, the rescaled
Weyl spinor $\phi_{ABCD}$ of the corresponding development $(\mathcal{M},\bmg)$ 
is of Petrov type D on an an open neighbourhood $\mathcal{W}$ of
$\mathcal{U}$, with $\mathcal{W}\subseteq\mathcal{D}^+(\mathcal{U})$, as evidenced by the
existence of a non-trivial (valence-2) Killing spinor satisfying
$\kappa_{AB}\kappa^{AB}\neq 0$.
\end{corollary}
\begin{proof}
  Suppose that the initial data for the rescaled Weyl spinor
  $\phi_{ABCD}$ is of Petrov type D on $\mathcal{U}$, with
$\lbrace \bmo, \bm\iota\rbrace$ an adapted spin dyad in terms of which
$\phi_{ABCD}=\phi o_{(A}o_B\iota_C\iota_{D)}$ for some field
$\phi$ on $\mathcal{U}$. Note that $\phi\neq 0$ everywhere on $\mathcal{U}$; if
$\phi=0$ at some $p\in\mathcal{U}$ then $\phi_{ABCD}|_{\mathcal{U}}$ would be of type O
at $p$. Define $\bar{\kappa}_{AB}=\phi^{-1/3}o_{(A}\iota_{B)}$ and
note that $\bar{\kappa}_{AB}\bar{\kappa}^{AB}\propto\phi^{-2/3}\neq 0$
on $\mathcal{U}$. A short calculation verifies that the Buchdahl
constraint $B(\bar{\bm\kappa})_{ABCD}=0$ is satisfied. The equation
$\mathcal{D}_{(AB}\bar{\kappa}_{CD)}=0$, on the other hand, decomposes
to give
\begin{eqnarray*}
&& \sigma=0,\\
&& \mathcal{D}_{\bm0\bm0}\phi = 3\phi(\kappa - \tau), \\
&& \mathcal{D}_{\bm0\bm1}\phi = \tfrac{3}{2} \phi(\rho + \mu),\\
&& \mathcal{D}_{\bm1\bm1}\phi = 3\phi(\nu - \pi), \\
&& \lambda = 0,
\end{eqnarray*}
in terms of the NP scalars ---to see this, substitute
$e^{\varkappa}=\phi^{-1/3}$ in equations
\eqref{SpatialSenTypeD0000}--\eqref{SpatialSenTypeD1111} of the
Appendix.  The first and last components are guaranteed by the
assumption \eqref{ShearConditions}.  The remaining equations are
precisely those implied by the constraint
$\mathcal{D}^{AB}\phi_{ABCD}=0$ ---see \eqref{WeylConstraintTypeD} of
the Appendix.  Hence, the conformal Killing spinor initial data
conditions \eqref{CSKIDs} are met by $\bar{\kappa}_{AB}$ and
propagating this data using \eqref{KillingSpinorIVP} we obtain a Killing spinor
$\kappa_{AB}$ on $\mathcal{D}^+(\mathcal{U})$, according to Theorem
\ref{Theorem_KS}. Now, by continuous dependence on the data, it
follows that there exists some open subset
$\mathcal{V}\subset\mathcal{D}^+(\mathcal{U})$ on which
$\kappa_{AB}\kappa^{AB}\neq 0$. Since the Buchdahl constraint
neceessarily holds on $\mathcal{V}$ and since
$\kappa_{AB}\kappa^{AB}\neq 0$, it follows that either $\phi_{ABCD}$ is of
Petrov type D or O on $\mathcal{V}$. Now
$\phi_{ABCD}\phi^{ABCD}\propto \phi^2\neq 0$ on $\mathcal{U}$ and so
by continuity $\phi_{ABCD}\phi^{ABCD}\neq 0$ on some open
$\tilde{\mathcal{V}}\subset\mathcal{V}$. Therefore, $\phi_{ABCD}$ cannot
be of type O anywhere on $\tilde{\mathcal{V}}$ and so must be of type
D.
\end{proof}

\begin{remark}{\em 
 While the constraint $\mathcal{D}^{AB}\phi_{ABCD}=0$ implies that the
 $\bm0\bm0\bm0\bm1$, $\bm0\bm0\bm1\bm1$ and
 $\bm0\bm1\bm1\bm1$-components of
 $\mathcal{D}_{(AB}\bar{\kappa}_{CD)}=0$ are all satisfied, as shown
 in the above proof, the $\bm0\bm0\bm0\bm0$ and
 $\bm1\bm1\bm1\bm1$-spin components are not guaranteed simply as a
 consequence of the expression
 $\bar{\kappa}_{AB}=\phi^{-1/3}o_{(A}\iota_{B)}$. Hence, the
 conditions of Theorem \ref{Theorem_KS} are more restrictive than
 simply assuming that $\phi_{ABCD}$ be of Petrov type D on
 $\mathcal{U}$. These remaining two components of
 $\mathcal{D}_{(AB}\bar{\kappa}_{CD)}=0$, namely $\sigma=\lambda=0$ are however implied by the
 $\bm0\bm0\bm0\bm0$ and $\bm1\bm1\bm1\bm1$-components of the
 \emph{evolution equation}
 $\nabla_{\bm\tau}\phi_{ABCD}=2\mathcal{D}_{(A}{}^F\phi_{BCD)F}$ if
 one \emph{assumes} that the Petrov type extends to the spacetime
 development, consistent with Remark
 \ref{Remark:DyadExpressionForKillingSpinorInTypeD}.}
\end{remark}

\section*{Conclusions}

In this article a \emph{conformal} version of the Killing spinor
initial data equations given in \cite{GarVal08c}, namely equations \eqref{CSKIDs}, are derived. By
conformal it is understood that $(\mathcal{M},\bmg)$ is conformally
related to a vacuum Einstein spacetime $(\tilde{\mathcal{M}},\tilde{\bmg})$.
It is shown that, modulo a minor technical assumption, the existence of a non-trivial solution of equations
\eqref{CSKIDs} is a necessary and sufficient
condition for the existence of a Killing spinor on the development in
the unphysical spacetime $(\mathcal{M},\bmg)$, as given by a solution of Friedrich's
\emph{conformal Einstein field equations}.
  The initial data equations \eqref{CSKIDs} are comprised of one differential condition and one
  algebraic condition.  The differential condition corresponds to the
  so-called \emph{spatial Killing spinor equation} while the algebraic
  condition is a restriction imposed by the \emph{Buchdahl constraint} on the initial
  hypersurface. This constraint can be interpreted as restricting the Petrov type
  of the initial data set for the conformal Einstein field equations.
  Although conditions \eqref{CSKIDs} look formally
  identical to those derived for the physical spacetime
  $(\tilde{\mathcal{M}},\tilde{\bmg})$, only with the Weyl spinor being replaced by the rescaled Weyl spinor, the derivation of these
  conditions in an unphysical spacetime
  $(\mathcal{M},\bmg)$ is non-trivial, owing to the non-trivial behaviour of the Einstein field equations under conformal conformal rescaling.  
  \medskip

In the case where the conformal rescaling is
trivial i.e. $\Xi \equiv 1$, we recover the results of
\cite{GarVal08c}. However, even for this case the set of variables to
be propagated in the physical and the unphysical spacetimes
are different. This difference can be traced back to the observation
that for a general Lorentzian manifold the vector
$\xi_{AA'}=\nabla^{B}{}_{A'}\kappa_{AB}$ is not a Killing vector.
Furthermore, even in the case where $(\mathcal{M},\bmg)$ satisfies the
conformal Einstein field equations, this vector does not correspond to
a conformal Killing vector as one could naively expect but rather to a
collineation for the rescaled Weyl tensor/spinor, as shown here. Naturally, once the existence of a
Killing spinor in the unphysical spacetime $(\mathcal{M},\bmg)$ is
established one can always construct, a posteriori, a conformal
Killing vector $X_{AA'}$ on $(\mathcal{M},\bmg)$ which corresponds to a Killing vector
$\tilde{\xi}_{AA'}$ of the physical spacetime
$(\tilde{\mathcal{M}},\tilde{\bmg})$.
Notice that the conformal approach followed in this article i.e.
the use of the conformal Einstein field equations, allows for the
possibility of $\mathcal{S}$ intersecting non-trivially with (or even
being a subset of) null infinity $\mathscr{I}$.  One possible
application is the characterisation of \emph{asymptotic initial data}
---initial data for the conformal Einstein field equations given on
$\mathscr{I}$--- for de Sitter-like spacetimes and, in particular, an
asymptotic characterisation of the Kerr-de Sitter spacetime, via the
existence of Killing spinors at the (spacelike) conformal
boundary. Such a result would mirror the spinorial characterisations
of Kerr given in \cite{BaeVal10a,BaeVal10b,BaeVal10c,BaeVal11b}. It
would also be of interest to compare with \cite{Gar16, Gar20}, in which
the propagation of Petrov type is explored in the context of both the
Einstein and the \emph{conformal} Einstein field equations.
\medskip

The applications of the core analysis of this article, however, are
not restricted to the study of de Sitter-like spacetimes. Indeed, the
most taxing part of the procedure, generally speaking, consists of finding a closed system
of homogeneous wave equations for the relevant zero-quantities, these equations being irrespective of the causal nature
of $\mathcal{S}$. For the spacelike case, the uniqueness result for solutions of homogeneous wave equations given by Theorem \ref{TheoremHomogeneousWave} ensures that if
trivial initial data on $\mathcal{S}$ is provided then $H_{A'ABC}$,
$B_{ABCD}$ and $F_{A'ABC}$ vanish on the domain of dependence of the
data.  Analogous theorems for the characteristic problem or the
initial boundary value problem could be used to obtain similar
conditions on a null or timelike hypersurface, as has been done in the
case of the Killing vector initial data equations and the physical
Killing spinor initial data equations ---see \cite{Pae14a,
  ColRacVal18} and \cite{CarVal18}.  In the case of the characteristic
problem, the conformal approach of this article would facilitate the
analysis of the existence of Killing spinors at the conformal boundary
of an asymptotically flat spacetime.  In the case of a timelike
hypersurface, the analogous conformal Killing spinor initial data
equations could potentially be used in the analysis and characterisation of
anti-de Sitter like spacetimes.

\subsection*{Acknowledgements}

We would like to thank Juan A. Valiente Kroon, David Hilditch and
Justin Feng for helpful discussions.  E. Gasper\'in acknowledges
support from Consejo Nacional de Ciencia y Tecnolog\'ia (Mexico)
---CONACyT studentship 494039/218141--- in the early stages of this
work and from Fundaç\~ao para a Ci\^encia e a Tecnologia (Portugal)
---FCT-2020.03845.CEECIND--- during its completion. J. L. Williams
acknowledges support from the COST grant CA16104.  \appendix

\section{Appendix}\label{Appendix_A}

Here we give the derivation of some of the key identities used in
section \ref{Sec:KSWaveEqs} and the proof of Lemma
\ref{prop_remove_redundant_conditions} from section
\ref{IntrinsicCKSID}.

\subsection{Spinorial identities}

Here we give the derivation of identities \eqref{Eq:MiscIdentity2} and
\eqref{Eq:MiscIdentity3}. First, by expanding out the definition of
$B_{ABCD}$,
\begin{align*}
\nabla^D{}_{G'}B_{ABFD} &= - \tfrac{1}{4} \phi_{ABFD}
\nabla_{CG'}\kappa^{CD} - \tfrac{3}{4} \kappa_{(A}{}^{C} \nabla_{\vert
  DG'\vert}\phi_{BF)C}{}^{D} \\ &\qquad - \tfrac{1}{4} \kappa^{CD}
\nabla_{DG'}\phi_{ABFC} + \tfrac{3}{4} \phi_{CD(AB}
\nabla^{D}{}_{\vert G'\vert}\kappa_{F)}{}^{C} \\ &=\tfrac{1}{2}
\xi^{C}{}_{G'} \phi_{ABFC} + \tfrac{1}{4} \phi_{CD(AB} H_{\vert
  G'\vert F)}{}^{CD} - \tfrac{1}{4} \kappa^{CD}
\nabla_{DG'}\phi_{ABFC}
\end{align*}
the second equality following from
$\nabla^A{}_{A'}\phi_{ABCD}=0$. Using \eqref{Eq:DecompGradKS}, the
definition of $F_{A'ABC}$ and rearranging,
\begin{equation}
 \kappa^{CD} \nabla_{DG'}\phi_{ABFC} = 2 \xi^{C}{}_{G'} \phi_{ABFC} +
 \phi_{CD(AB} H_{\vert G'\vert F)}{}^{CD} - 4
 F_{G'ABF}\label{Eq:MiscIdentity}
 \end{equation}
Then, using the irreducible decomposition
\begin{equation*}
\kappa_{F}{}^{D} \phi_{ABCD} = B_{ABCF} + \tfrac{1}{4} \kappa^{DG}
\phi_{BCDG} \epsilon_{AF} + \tfrac{1}{4} \kappa^{DG} \phi_{ACDG}
\epsilon_{BF} + \tfrac{1}{4} \kappa^{DG} \phi_{ABDG} \epsilon_{CF},
\end{equation*}
one calculates
\begin{align*}
	\kappa_{A}{}^{F} \nabla_{GF'}\phi_{BCDF} &=
        \nabla_{GF'}(\kappa_{A}{}^{F}\phi_{BCDF}) -
        \phi_{BCDF}\nabla_{GF'}\kappa_{A}{}^{F}\\ &=
        \nabla_{GF'}B_{ABCD} - \tfrac{3}{4} \kappa^{FH} \epsilon_{A(B}
        \nabla_{\vert GF'\vert}\phi_{CD)FH}\\ &\qquad - \phi_{BCDF}
        \nabla_{GF'}\kappa_{A}{}^{F} - \tfrac{3}{4}
        \epsilon_{A(B}\phi_{CD)FH}
        \nabla_{GF'}\kappa^{FH}\\ &=\nabla_{GF'}B_{ABCD} +
        \tfrac{1}{3} \xi_{AF'} \phi_{BCDG} - \tfrac{1}{3} \phi_{BCDF}
        H_{F'AG}{}^{F} + \tfrac{1}{2} \xi^{F}{}_{F'}\epsilon_{A(B}
        \phi_{CD)GF} \\ &\qquad - \tfrac{1}{4}
        \epsilon_{A(B}\phi_{CD)FH} H_{F'G}{}^{FH} - \tfrac{1}{3}
        \xi^{F}{}_{F'} \phi_{BCDF} \epsilon_{AG} - \tfrac{3}{4}
        \kappa^{FH} \epsilon_{A(B} \nabla_{\vert
          GF'\vert}\phi_{CD)FH},
\end{align*}
where the third equality follows from \eqref{Eq:DecompGradKS}. Then
swapping indices on the $\nabla\bm\phi$ terms and using
$\nabla^A{}_{A'}\phi_{ABCD}=0$, one obtains \eqref{Eq:MiscIdentity2}:
\begin{align*}
    \kappa_{A}{}^{F} \nabla_{FF'}\phi_{BCDG} &= \tfrac{1}{3} \xi_{AF'}
    \phi_{BCDG} - \tfrac{1}{3} \xi^{F}{}_{F'} \phi_{BCGF}
    \epsilon_{AD} - \tfrac{1}{3}
    \xi^{F}{}_{F'}\phi_{(BC|DF}\epsilon_{A|G)} \nonumber \\ & -
    \tfrac{2}{3} \xi^{F}{}_{F'}\phi_{(B|D|C|F}\epsilon_{A|G)} +
    \epsilon_{AD} F_{F'BCG} + \nabla_{(B|F'}B_{A|CG)D} + \tfrac{1}{3}
    \phi_{(BC|D}{}^{F}H_{F'A|G)F} \nonumber \\& + 2
    \epsilon_{A(B}F_{|F'\vert CG)D} - \tfrac{1}{3}
    \phi_{(BC}{}^{FH}H_{|F'|G)FH}\epsilon_{AD} - \tfrac{1}{6}
    \phi_{(BC}{}^{FH}H_{|F'DFH}\epsilon_{A|G)} \nonumber\\ & -
    \tfrac{1}{3} \phi_{(B|D}{}^{FH}H_{F'|C|FH}\epsilon_{A|G)} -
    \tfrac{1}{6}
    \phi_{D(B}{}^{FH}H_{|F'|C|FH}\epsilon_{A|G)}.
\end{align*}
To show \eqref{Eq:MiscIdentity3}, we first apply $\nabla^A{}_{A'}$ to
identity \eqref{Eq:UsefulIdentity2}, using \eqref{Eq:DecompGradKS} and
$\nabla^A{}_{A'}\phi_{ABCD}=0$, one obtains
\begin{align*}
&\tfrac{1}{2} \xi^{A}{}_{A'} \phi_{(BF}{}^{DG} \phi_{C)ADG} + \tfrac{1}{12} \
\phi_{ADGH} \phi_{BCF}{}^{H} H_{A'}{}^{ADG} + \tfrac{1}{4} \
\phi_{A(BF}{}^{H} \phi_{C)DGH} H_{A'}{}^{ADG}  \\
&-  \tfrac{3}{4} \
\kappa^{AD} \phi_{(CF}{}^{GH} \nabla_{\vert HA'\vert}\phi_{B)ADG} + \tfrac{1}{4} \
\kappa^{AD} \phi_{AD}{}^{GH} \nabla_{HA'}\phi_{BCFG} = 2 \nabla^G{}_{A'}(B_{(BC}{}^{AD} \phi_{FG)}{}_{AD})	
\end{align*}
Then, using \eqref{Eq:MiscIdentity2}, along with the irreducible decomposition 
\begin{equation*}
\phi_{ABCD} \phi_{FG}{}^{CD} = \tfrac{1}{6} \phi_{CDHL} \phi^{CDHL}
\epsilon_{AG} \epsilon_{BF} + \tfrac{1}{6} \phi_{CDHL} \phi^{CDHL}
\epsilon_{AF} \epsilon_{BG} +
\phi_{(AB}{}^{CD}\phi_{FG)CD}, %\label{Eq:DecompPhiSquared}
\end{equation*}
one finally obtains \eqref{Eq:MiscIdentity3}: 
\begin{align*}
    \kappa^{AD} \phi_{AD}{}^{GH} \nabla_{HA'}\phi_{BCFG} &= 4
    \xi^{A}{}_{A'}\phi_{(BC}{}^{DG}\phi_{F)ADG} + \tfrac{1}{2}
    \phi_{ADGH} \phi^{ADGH} H_{A'BCF} \nonumber\\ & - 4
    B_{(B}{}^{ADG}\nabla_{|AA'|}\phi_{CF)DG} - 8
    \phi_{(BC}{}^{AD}F_{|A'|F)ADG}\nonumber \\ & - 4
    \phi_{(B}{}^{ADG}\nabla_{|AA'|}B_{CF)DG} - \tfrac{1}{3}
    \phi_{(BC}{}^{AD}\phi_{|AD}{}^{GH}H_{A'|F)GH}\nonumber \\ & -
    \tfrac{2}{3}
    \phi_{(B}{}^{ADG}\phi_{C|AD}{}^{H}H_{A'|F)GH}. \label{Eq:MiscIdentity3}
\end{align*}

\subsection{Proof of Lemma \ref{prop_remove_redundant_conditions}}
\label{Sec:ProofOfProp3}

Assuming $\bar{\kappa}_{AB}\neq 0$, the Buchdahl constraint restricts
the Petrov type of $\bm\phi$ to be type D, N or O.  We follow the same
strategy as in \cite{BaeVal10c}, expanding out conditions
\eqref{RedundantCondition1}--\eqref{RedundantCondition2} in an adapted
spin dyad, considering separately the cases (i):
$\bar{\kappa}_{AB}\bar{\kappa}^{AB}\equiv 0$, $\bar{\kappa}_{AB}\neq
0$ and (ii): $\bar{\kappa}_{AB}\bar{\kappa}^{AB}\neq 0$ on
$\mathcal{U}$, from Lemma \ref{prop_remove_redundant_conditions},
corresponding to Petrov types N and D, respectively.  Observe that for
Type O, for which $\phi_{ABCD}=0$, the proof of Lemma
\ref{prop_remove_redundant_conditions} trivialises, so only the types
$N$ and $D$ are needed.  \\

Recalling that $\mathcal{D}_{AB}:= \tau_{(A}{}^{A'}\nabla_{B)A'}$, a straightforward computation yields 
\begin{align*}
& o^Ao^Bo^C\mathcal{D}_{AB}o_C = -\sigma,\\
& o^Ao^B\iota^C\mathcal{D}_{AB}o_C=o^Ao^Bo^C\mathcal{D}_{AB}\iota_C=-\beta,\\
& o^A\iota^B o^C\mathcal{D}_{AB}o_C = \tfrac{1}{2}(\kappa - \tau),\\
& o^A\iota^B\iota^C\mathcal{D}_{AB}o_C = o^A\iota^B o^C\mathcal{D}_{AB}\iota_C = \tfrac{1}{2}(\epsilon - \gamma),\\
& \iota^A\iota^B o^C\mathcal{D}_{AB}o_C = \rho,\\
& \iota^A\iota^B\iota^C\mathcal{D}_{AB}o_C = \iota^A\iota^B o^C\mathcal{D}_{AB}\iota_C = \alpha,\\
& o^A o^B\iota^C\mathcal{D}_{AB}\iota_C = -\mu,\\
& o^A\iota^B\iota^C\mathcal{D}_{AB}\iota_C = \tfrac{1}{2}(\pi - \nu),\\
& \iota^A\iota^B\iota^C\mathcal{D}_{AB}\iota_C = \lambda	,
\end{align*}
where we are following the conventions of \cite{PenRin84} in the
definition of the Newmann--Penrose (NP) scalars $\alpha, \beta,
\epsilon, \gamma, \kappa, \mu, \lambda, \rho, \tau, \sigma, \nu, \pi.$
\\

The following identities will also be useful
\begin{subequations}
\begin{eqnarray}
&& \mathcal{D}_{AB}o^B = o^Bo^C\mathcal{D}_{AC}\iota_B - \iota^B o^C\mathcal{D}_{AB}o_C,\\
&& \mathcal{D}_{AB}\iota^B = \iota^B o^C(\mathcal{D}_{AC}\iota_B - \mathcal{D}_{AB}\iota_C),
\end{eqnarray}
\end{subequations}
and follow easily from $\epsilon_{AB}=o_A\iota_B - o_B\iota_A$. 

\subsubsection{Case I: $\bar{\kappa}_{AB}\bar{\kappa}^{AB}\equiv 0$, $\bar{\kappa}_{AB}\neq 0$ on $\mathcal{U}$} \label{TypeNCase}

The assumption on $\bar{\kappa}_{AB}$ imply that there exists a spin
dyad $\lbrace \bm\omicron, \bm\iota\rbrace$ on $\mathcal{U}$ such that
$\bar{\kappa}_{AB}=\omicron_A\omicron_B$. The Buchdahl constraint then
implies that
\begin{equation}
\phi_{ABCD}=\phi \omicron_A\omicron_B\omicron_C\omicron_D
\end{equation}
for some scalar field $\phi:\mathcal{U}\rightarrow\mathbb{C}$ and
hence that the curvature is of Petrov type N on $\mathcal{U}$. Note
that $\phi_{ABCD}\omicron^D=0$. The equation
$\mathcal{D}_{(AB}\bar{\kappa}_{CD)}=0$ implies
\begin{equation*}
 \omicron^A \omicron^B \omicron^C \mathcal{D}_{AB}\omicron_C=
 \omicron^A \omicron^B \iota^C\mathcal{D}_{(AB}\omicron_{C)}=
 \omicron^A \iota^B \iota^C\mathcal{D}_{(AB}\omicron_{C)}= \iota^A
 \iota^B \iota^C\mathcal{D}_{AB}\omicron_{C}=0,
\end{equation*}
implying that $\mathcal{D}_{(AB}\omicron_{C)}=0$ ---that is to say
that $\omicron_A$ is a twistor candidate. In terms of the NP scalars,
the above read as follows
\begin{equation}\label{NPRelationsTypeN} 
\sigma=- \beta + \kappa - \tau=\epsilon - \gamma + \rho=\alpha=0.
\end{equation}	
Using these relations, we obtain
\begin{equation}\label{AuxSpinorsTypeN}
 \bar{\xi} = -3\beta,\qquad \bar{\xi}_{AB} = 2\rho o_Ao_B - 2\beta
 o_{(A}\iota_{B)}.
\end{equation}
The non-trivial component of the constraint
$\mathcal{D}^{AB}\phi_{ABCD}=0$ reduces to
\begin{equation}
 \mathcal{D}_{\bm0\bm0}\phi = \tfrac{5}{3}\phi (2\beta + \kappa -
 \tau)=5\phi\beta.\label{TypeNweylConstraint}
\end{equation}
Then, substituting \eqref{AuxSpinorsTypeN}, condition
\eqref{RedundantCondition1} reduces to
\begin{align*}
\omicron^A \iota^B \iota^C \iota^D
\left(2\bar{\kappa}_{(A}{}^{F}\mathcal{D}_{B}{}^{G}\phi_{CD)FG} +
\phi_{(ABC}{}^{F}\bar{\xi}_{D)F}\right) &=\tfrac{1}{2}\phi\sigma
=0,\\ \iota^A\iota^B \iota^C \iota^D
\left(2\bar{\kappa}_{(A}{}^{F}\mathcal{D}_{B}{}^{G}\phi_{CD)FG} +
\phi_{(ABC}{}^{F}\bar{\xi}_{D)F}\right) &=\phi(\beta - \kappa + \tau)
= 0,
\end{align*}
with the second equality following from \eqref{NPRelationsTypeN} and
with all other components vanishing trivially. These are essentially
the same computations as in \cite{BaeVal10c}. On the other hand,
substituting \eqref{AuxSpinorsTypeN}, condition
\eqref{RedundantCondition2} reduces to
\begin{align*}
\omicron^A\omicron^B\iota^C\iota^D\left(\bar{\xi}^{FG}\mathcal{D}_{FG}\phi_{ABCD}
+
\tfrac{2}{3}\bar{\xi}\mathcal{D}_{(A}{}^F\phi_{BCD)F}\right)&=\tfrac{1}{3}\phi
\xi
\sigma=0,\\ \omicron^A\iota^B\iota^C\iota^D\left(\bar{\xi}^{FG}\mathcal{D}_{FG}\phi_{ABCD}
+ \tfrac{2}{3}\bar{\xi}\mathcal{D}_{(A}{}^F\phi_{BCD)F}\right)&=
\tfrac{1}{2} \beta \mathcal{D}_{\bm0\bm0}\phi-2 \beta^2 \phi -
\tfrac{1}{2} \beta \kappa \phi - 2 \rho \sigma \phi + \tfrac{1}{2}
\beta \tau \phi
=0,\\ \iota^A\iota^B\iota^C\iota^D\left(\bar{\xi}^{FG}\mathcal{D}_{FG}\phi_{ABCD}
+ \tfrac{2}{3}\bar{\xi}\mathcal{D}_{(A}{}^F\phi_{BCD)F}\right) &= 2
\rho \mathcal{D}_{\bm0\bm0}\phi-10 \beta \rho \phi =0,
\end{align*}
where we are again using \eqref{NPRelationsTypeN} and
\eqref{TypeNweylConstraint}. All other components vanish
trivially. Hence, in this case, both conditions
\eqref{RedundantCondition1} and \eqref{RedundantCondition2}
trivialise.

\subsubsection{Case II: $\bar{\kappa}_{AB}\bar{\kappa}^{AB}\neq 0$ on $\mathcal{U}$}\label{TypeDCase}

There exists a spin dyad $\lbrace \bm\omicron, \bm\iota\rbrace$ such
that $\bar{\kappa}_{AB} = e^{\varkappa} \omicron_{(A}\iota_{B)}$ for
some $\varkappa:\mathcal{U}\rightarrow\mathbb{C}$. The Buchdahl
constraint then implies that the rescaled Weyl spinor takes the form
\[ \phi_{ABCD}= \phi \omicron_{(A}\omicron_B\iota_C\iota_{D)}\]
for some scalar field $\phi:\mathcal{U}\rightarrow\mathbb{C}$ and
hence that the curvature is of Petrov type D on $\mathcal{U}$. The
equation $\mathcal{D}_{(AB}\bar{\kappa}_{CD)}=0$ is equivalent to
\begin{subequations}
\begin{eqnarray}
&& \sigma=0,\label{SpatialSenTypeD0000}\\
&& \mathcal{D}_{\bm0\bm0}\varkappa = \tau - \kappa, \label{SpatialSenTypeD0001}\\
&& \mathcal{D}_{\bm0\bm1}\varkappa = -\tfrac{1}{2} (\rho+\mu),\label{SpatialSenTypeD0011}\\
&& \mathcal{D}_{\bm1\bm1}\varkappa = \pi - \nu, \label{SpatialSenTypeD0111}\\
&& \lambda = 0.\label{SpatialSenTypeD1111}
\end{eqnarray}
\end{subequations}
Using the above, the auxiliary spinors can be written as 
\begin{subequations}
\begin{eqnarray} 
&& \bar{\xi} =-\tfrac{3}{2}e^{\varkappa}(\rho+\mu), \label{AuxSpinor1TypeD}\\
&& \bar{\xi}_{AB} = e^{\varkappa} (\nu - \pi) o_{A} o_{B} + e^{\varkappa}(\rho-\mu) o_{(A}\iota_{B)} + e^{\varkappa} (\tau - \kappa) \iota_{A} \iota_{B}. \label{AuxSpinor2TypeD}
\end{eqnarray}
\end{subequations}
The constraint $\mathcal{D}^{CD}\phi_{CDAB}=0$ is equivalent to 
\begin{equation}\label{WeylConstraintTypeD}
  \mathcal{D}_{\bm0\bm0}\phi = 3 \phi(\kappa - \tau), \qquad 
  \mathcal{D}_{\bm0\bm1}\phi = \tfrac{3}{2}\phi(\rho + \mu), \qquad
  \mathcal{D}_{\bm1\bm1}\phi = 3\phi (\nu  - \pi). 
\end{equation}
Then, substituting \eqref{AuxSpinor1TypeD} and \eqref{AuxSpinor2TypeD}, condition \eqref{RedundantCondition1} decomposes as follows
\begin{align*}
\omicron^A \omicron^B \omicron^C \omicron^D \left(2\bar{\kappa}_{(A}{}^{F}\mathcal{D}_{B}{}^{G}\phi_{CD)FG} + \phi_{(ABC}{}^{F}\bar{\xi}_{D)F}\right) 
&=- \tfrac{1}{2} e^{\varkappa}\phi \sigma=0,\\
\omicron^A \omicron^B \omicron^C \iota^D \left(2\bar{\kappa}_{(A}{}^{F}\mathcal{D}_{B}{}^{G}\phi_{CD)FG} + \phi_{(ABC}{}^{F}\bar{\xi}_{D)F}\right) & =\tfrac{1}{24} e^{\varkappa} \mathcal{D}_{\bm0\bm0}\phi + \tfrac{1}{8} e^{\varkappa}\phi (\tau - \kappa)  =0,\\
 \omicron^A \omicron^B \iota^C \iota^D \left(2\bar{\kappa}_{(A}{}^{F}\mathcal{D}_{B}{}^{G}\phi_{CD)FG} + \phi_{(ABC}{}^{F}\bar{\xi}_{D)F}\right) 
& = \tfrac{1}{18} e^{\varkappa} \mathcal{D}_{\bm0\bm1}\phi - \tfrac{1}{12} e^{\varkappa}\phi (\rho + \mu)  = 0,\\
 \omicron^A \iota^B \iota^C \iota^D \left(2\bar{\kappa}_{(A}{}^{F}\mathcal{D}_{B}{}^{G}\phi_{CD)FG} + \phi_{(ABC}{}^{F}\bar{\xi}_{D)F}\right) & =\tfrac{1}{24} e^{\varkappa} \mathcal{D}_{\bm1\bm1}\phi + \tfrac{1}{8} e^{\varkappa} \phi(\pi-\nu)  = 0,\\
 \iota^A \iota^B \iota^C \iota^D \left(2\bar{\kappa}_{(A}{}^{F}\mathcal{D}_{B}{}^{G}\phi_{CD)FG} + \phi_{(ABC}{}^{F}\bar{\xi}_{D)F}\right) & = - \tfrac{1}{2} e^{\kappa} \phi \lambda = 0,
\end{align*}
the equality with zero following from
\eqref{WeylConstraintTypeD}. Again, these are essentially the same
computations as in \cite{BaeVal10c}.  On the other hand, substituting
\eqref{AuxSpinor1TypeD} and \eqref{AuxSpinor2TypeD} into
\eqref{RedundantCondition2},
\begin{align*}
& o^Ao^Bo^Co^D(\bar{\xi}^{FG}\mathcal{D}_{FG}\phi_{ABCD} + \tfrac{2}{3} \bar{\xi} \mathcal{D}_{(A}{}^{F}\phi_{BCD)F})  =\tfrac{1}{3} \phi \sigma  \bar{\xi} =0,\\
& o^Ao^Bo^C\iota^D(\bar{\xi}^{FG}\mathcal{D}_{FG}\phi_{ABCD} + \tfrac{2}{3} \bar{\xi} \mathcal{D}_{(A}{}^{F}\phi_{BCD)F}) \\
&\qquad\qquad\qquad\qquad\qquad = \tfrac{1}{8} e^{\varkappa} (\rho+\mu) \mathcal{D}_{\bm0\bm0}\phi + \tfrac{3}{8}e^{\varkappa}\phi  (\tau - \kappa)(\rho+\mu)  +  \tfrac{1}{2} e^{\varkappa}\phi(\pi-\nu) \sigma   = 0	,\\
& o^Ao^B\iota^C\iota^D(\bar{\xi}^{FG}\mathcal{D}_{FG}\phi_{ABCD} + \tfrac{2}{3} \bar{\xi} \mathcal{D}_{(A}{}^{F}\phi_{BCD)F})  \\
&\qquad\qquad\qquad\qquad\qquad=	\tfrac{1}{6} e^{\varkappa}\left((\nu-\pi)\mathcal{D}_{\bm0\bm0}\phi + (\rho-\mu)\mathcal{D}_{\bm0\bm1}\phi + (\tau-\kappa)\mathcal{D}_{\bm1\bm1}\phi \right)  + \tfrac{1}{4} e^{\varkappa} \phi(\mu^2-\rho^2) = 0,\\
& o^A\iota^B\iota^C\iota^D(\bar{\xi}^{FG}\mathcal{D}_{FG}\phi_{ABCD} + \tfrac{2}{3} \bar{\xi} \mathcal{D}_{(A}{}^{F}\phi_{BCD)F})  \\
& \qquad\qquad\qquad\qquad\qquad =	- \tfrac{1}{8} e^{\varkappa}  (\rho+\mu) \mathcal{D}_{\bm1\bm1}\phi  +  \tfrac{3}{8} e^{\varkappa} \phi(\rho+\mu) (\nu-\pi) + \tfrac{1}{2} e^{\varkappa}\phi (\kappa-\tau) \lambda    = 0,\\
& \iota^A\iota^B\iota^C\iota^D(\bar{\xi}^{FG}\mathcal{D}_{FG}\phi_{ABCD} + \tfrac{2}{3} \bar{\xi} \mathcal{D}_{(A}{}^{F}\phi_{BCD)F})  =	- \tfrac{1}{3}\phi \lambda \bar{\xi} = 0.
\end{align*}
Hence, again, \eqref{RedundantCondition1} and
\eqref{RedundantCondition2} trivialise. Combining the result of this
section with the previous, Lemma
\ref{prop_remove_redundant_conditions} follows immediately.

%%%%%%%%%%%%%%%%% 
%%\bibliographystyle{/home/egarcia/Dropbox/References/reporthack}
%\bibliographystyle{/home/gasperin/Academic/Dropbox/References/reporthack}
%%\bibliography{/home/egarcia/Dropbox/References/GRbibJune2021a}
%\bibliography{/home/gasperin/Academic/Dropbox/References/GRbibJune2021a}
%%%%%%%%%%%%%%%%%

\end{document}